\newtheorem{proposition}{Proposition}
\newtheorem{definition}{Definition}
\newtheorem{strategy}{Strategy}
\newcommand{\Rmnum}[1]{\expandafter\@slowromancap\romannumeral #1@}
\newcommand{\ie}{\emph{i.e.}, }
\begin{document}
%
% paper title
% can use linebreaks \\ within to get better formatting as desired
\title{Primary Traffic Characterization and Secondary Transmissions}
%
%
% author names and IEEE memberships
% note positions of commas and nonbreaking spaces ( ~ ) LaTeX will not break
% a structure at a ~ so this keeps an author's name from being broken across
% two lines.
% use \thanks{} to gain access to the first footnote area
% a separate \thanks must be used for each paragraph as LaTeX2e's \thanks
% was not built to handle multiple paragraphs
%

\author{Yingxi~Liu,~\IEEEmembership{Student member,~IEEE,}
        Ahmed~Tewfik,~\IEEEmembership{Fellow,~IEEE}\\% <-this % stops a space
\thanks{Y. Liu is with the Department of Electrical and Computer Engineering, University of Texas at Austin, Austin, TX, 78741 USA e-mail: yingxi@utexas.edu.}% <-this % stops a space
\thanks{A. Tewfik is with the Department of Electrical and Computer Engineering, University of Texas at Austin, Austin, TX, 78741 USA e-mail: tewfik@austin.utexas.edu.}}% <-this % stops a space

% The paper headers
\markboth{Paper Draft}%
{Shell \MakeLowercase{\textit{et al.}}: Bare Demo of IEEEtran.cls for Journals}
% The only time the second header will appear is for the odd numbered pages
% after the title page when using the twoside option.
%
% *** Note that you probably will NOT want to include the author's ***
% *** name in the headers of peer review papers.                   ***
% You can use \ifCLASSOPTIONpeerreview for conditional compilation here if
% you desire.

% If you want to put a publisher's ID mark on the page you can do it like
% this:
%\IEEEpubid{0000--0000/00\$00.00~\copyright~2007 IEEE}
% Remember, if you use this you must call \IEEEpubidadjcol in the second
% column for its text to clear the IEEEpubid mark.

% use for special paper notices
%\IEEEspecialpapernotice{(Invited Paper)}

% make the title area
\maketitle

\begin{abstract}
%\boldmath
Channel idle time distribution based secondary transmission strategies have been studied intensively in the literature. Under various performance metrics, the ultimate performance of secondary devices are eventually dictated by the presumed channel idle time distribution. Such distributions can take any arbitrary form in practice. In this work, we study idle time distributions in wireless local area networks (WLAN) using large amount of the channel idle time data collected in real-world WLAN networks. We demonstrate with experimental data that the channel idle time distribution can be closely modeled by hyper-exponential distribution. Furthermore, one can treat the primary packet arrival process as a semi-Markov modulated Poisson process. Several secondary transmission strategies are discussed under this model. It is shown that using only one hyper-exponential distribution, the secondary user can achieve a desirable performance when the primary packet arrival process is stationary. However, experimental data suggests that in practice, this process is not stationary and the secondary user can experience a large performance loss with stationary transmission strategy. We propose a novel transmission strategy that achieves suboptimal secondary user performance when the idle time distribution is not stationary. The performances of secondary transmission strategies are demonstrated using experimental data.
\end{abstract}

% Note that keywords are not normally used for peerreview papers.
\begin{IEEEkeywords}
cognitive radio, hyper-exponential distribution, Markov-modulated Poisson process
\end{IEEEkeywords}

\IEEEpeerreviewmaketitle

\section{Introduction}
As wireless applications proliferate, the wireless industry is facing a critical problem of decreasing available radio frequencies. It is acknowledged that the concept of cognitive radio \cite{mitola1999cognitive} may offer a solution to this spectrum scarcity problem. Cognitive radio refers to the type of devices that are aware of the spectrum usage in their surroundings and adopt certain mechanism to access the spectrum without affecting other ongoing wireless traffic. The problem of accessing local spectrum without affecting the primary users (PU) has been studied in the frequency, e.g. \cite{ji2007cognitive}, space, e.g. \cite{win2009mathematical}, and time, e.g. \cite{huang2009optimal} domains.

\subsection{Prior work}
A considerable amount of research effort has been spent on the time domain based cognitive radio channel access strategies. In \cite{huang2009optimal, geirhofer2008cognitive}, channel accessing strategies are obtained by assuming that the channel idle time distribution (CITD) follows certain distributions. In \cite{zhao2007decentralized, lai2010cognitive}, the idle and busy states of channels are modeled as Markovian or independently and identically distributed (i.i.d.) over time. The basic structure in most studies is to maximize the access time of the secondary users (SU) subject to a predefined impact on the primary users (PU) performance and based on an assumed CITD. The capacity of the cognitive radio network has been discussion in various aspects. In \cite{4907431, 4373439}, the authors studies the secondary capacity where the primary signal is treated as interference. In \cite{4786456, 5419086}, the secondary capacity is studied in fading channels and the SU needs to meet an interference constraint to the PU. In \cite{4155368, 5208469}, the secondary capacity is studied assuming that the SU has the information of the PU signals. Our study of the secondary capacity differs those prior works in that we focus on the exploration of the instantaneous transmission opportunities rather than transmission power or signal design.

\par
In most of prior works, the assumptions imposed on the CITD are fairly simple. Most of these are inadequate to describe practical primary traffic patterns. In the literature, it is verified that the internet traffic exhibits a heavy-tail and long-range dependence in its traffic (packet) size, transmission time, channel idle time, etc, c.f. \cite{crovella1998heavy}, \cite{paxson1995wide}. This implies that modeling of the CITD with a light tail distribution is inaccurate. Surprisingly, the data collected from several WLAN networks shows that the complementary cumulative distribution function (CCDF) of the CITD features two different characteristics. It follows a power-law decay up to some point. Then it follows an exponential decay. A similar phenomenon is also found in the distribution of inter-connect times between mobile devices \cite{karagiannis2010power}. A careful examination of the WLAN networks reveals that packet arrivals exhibit burst behavior. Markov modulated Poisson process (MMPP) \cite{heffes1986markov} or semi-Markov Modulated Poisson process (SMMPP) are natural tools for modeling this bursty behavior.

\par
The cognitive radio that explores time domain transmission opportunities has a great potential in the future of wireless networks. The technique proposed in this paper is an ad hoc-like secondary network. The primary network can either be ad hoc or centralized controlled networks. It can be applied to many decentralized wireless networks on both licensed and license-free frequency bands. One of the possible application is for the wireless sensor networks to work on the 2.4 GHz or 5 GHz Wi-Fi band. The wireless sensors can exchange information between each other but they do not need to connect to the Wi-Fi network. This technique provides a method for them to connect with decent data rate while remain transparent to the Wi-Fi network. In another scenario, this technique can also relief the primary network from heavy traffic burden. For example, when to hand-held devices want to stream data to each other, they can directly connect to each other rather than connect through an access point. A preliminary version of this technique is named Wi-Fi Direct \cite{alliance2010wi}, which does not consider the protection of other Wi-Fi connection and optimization of its own performance. There also have been other successful techniques, for example, the Bluetooth and infrared ray connection. Nevertheless, they both require additional radio chain and processors and are quite range limited.

\subsection{Contribution}
In this work, the characteristics of the CITD are studied using experimental data. It is found out that the hyper-exponential distribution, which is a mixture of multiple exponential distributions, provides a close fit. Similar approximations of other network metrics can be found in \cite{feldmann1998fitting}. Surprisingly, it turns out that if the durations of the packets are ignored in the SMMPP model, the CITD also follows the hyper-exponential distribution. To study secondary transmission strategies, under the assumption that the primary traffic that it follows the SMMPP model, several secondary transmission strategies are devised to access the channel during the channel idle times. These strategies require different amounts of knowledge of the CITD or equivalently the primary traffic state. Our results shows that the SU only needs to know the hyper-exponential distribution rather than the entire SMMPP to achieve a desirable secondary performance.

\par
The strategies based on the SMMPP model are not robust to model error. A strategy that is robust to errors in the model is designed which tries to access one channel idle time multiple times. This strategy is a refined version of that proposed in \cite{liu2010novel, liuhyperexponential}. To sum up, our contributions in this work are three-fold:
\begin{enumerate}
  \item The hyper-exponential approximation of the CITD is studied and the SMMPP model is proposed which closely models the wireless traffic.
  \item The novel concept of primary traffic state is proposed and a complete study of secondary transmission strategies is provided based on the SMMPP model with consideration of available primary traffic information to the SU.
  \item A novel transmission strategy is proposed when the primary traffic exhibits non-stationarity. The strategy guarantees that the impact on the PU remains consistently under the desired limit.
\end{enumerate}
Furthermore, an experimental study is provide to support the conclusions.

\par
This rest of this paper is organized as follows. Section \ref{sec:2} discusses the hyper-exponential approximation of the empirical CITD. Section \ref{sec:3} briefly discusses the general secondary transmission strategy design problem and provides some definitions. Secondary transmission strategies with primary traffic state information are discussed in Section \ref{sec:4}. Section \ref{sec:5} proposes a strategy for dealing with non-stationary primary traffic. Section \ref{sec:6} provides experimental study results. Section \ref{sec:7} concludes this paper.

\section{Experimental Characterization of Multi-user Multi-service Primary Network Traffic}\label{sec:2}
This section studies the CITD features of a multi-user multi-service primary network. We use the channel activity of WLAN networks as a prototype for packet based random access primary networks in cognitive radio scenario. The channel idle time data is obtained in three locations using USRP2 \cite{kershaw2007kismet}, a software-defined radio device that captures signal with sampling frequency 4 MHz on Wi-Fi channel 11 (center frequency 2.462 GHz). In all of the locations, there are access points (AP) located on a grid, with about 30 meters apart from each other. The first location is in a university cafeteria, with about 5 APs and 40 to 50 moving and static Wi-Fi stations. The second location is in a university office, where there are about 3 to 5 APs and 5 Wi-Fi stations. The third location is in a university library, with about 5 APs and 10 to 20 static Wi-Fi stations. The three data sets, corresponding to the three locations, each contains 100,000 channel idle time samples. The samples are acquired sequentially. Table \ref{tb:dataset1} lists the basic statistics of the data sets.
\begin{table}[!t]
\centering
\caption{Channel idle time statistics observed in three scenarios}
  \begin{tabular}{ l || c | c | c   }
    \hline
     & Set 1 (office) & Set 2 (cafe) & Set 3 (library) \\ \hline \hline
    Set size & 100,000 & 100,000 & 100,000 \\ \hline
    Mean & 0.0012s & 0.0023s & 0.0021s \\ \hline
    Std. Deviation & 0.0023 & 0.0046 & 0.0042 \\
    \hline
  \end{tabular}\label{tb:dataset1}
\end{table}

\subsection{Empirical CITD}
\begin{figure*}[!t]
  \centering
  \subfloat[log-log plot]{\label{fig:loglog}\includegraphics[width=3.5in]{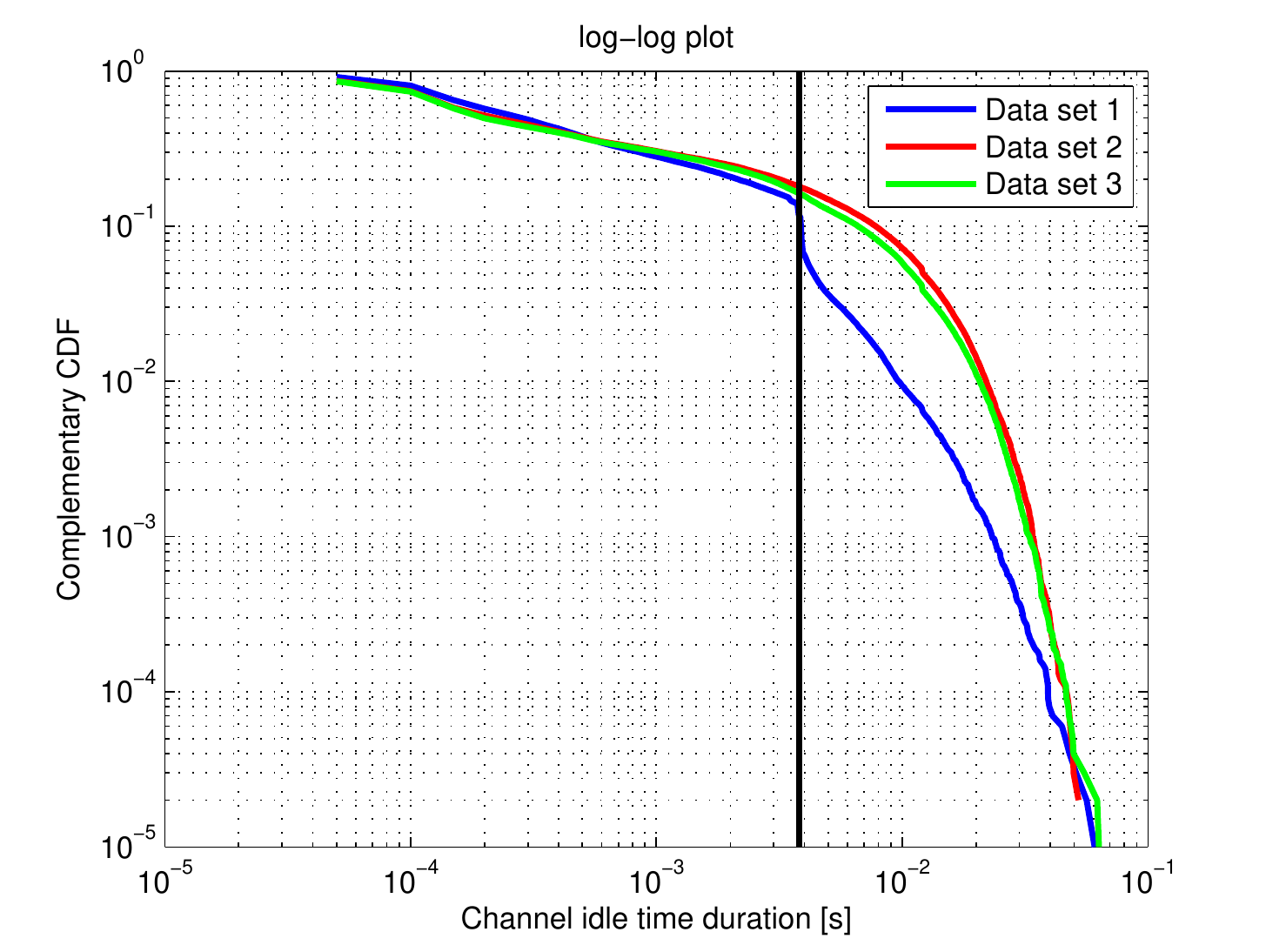}}
  \subfloat[linear-log plot]{\label{fig:linearlog}\includegraphics[width=3.5in]{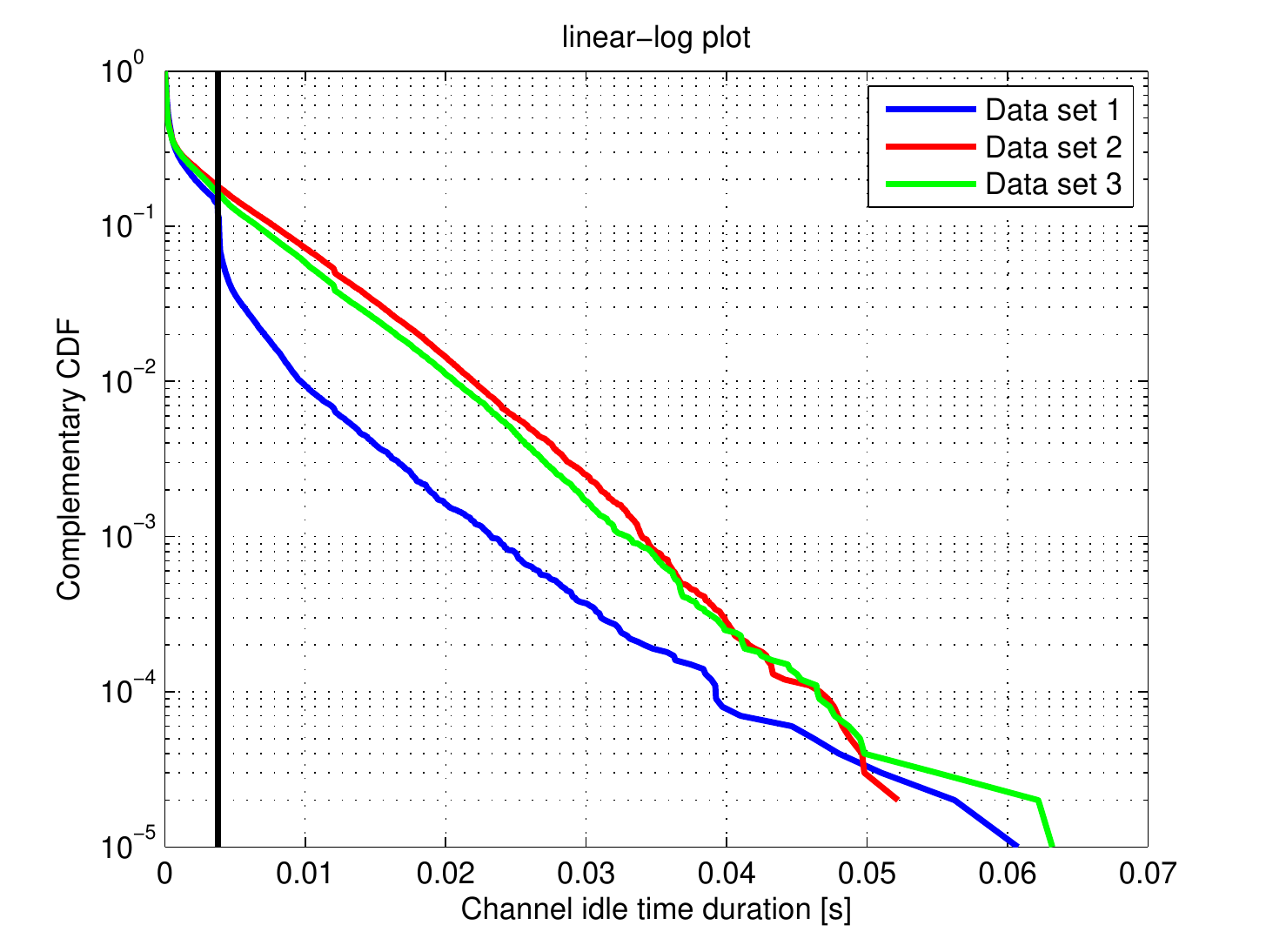}}
  \caption{Log-log and linear-log plots of the CCDF of the data sets}
  \label{fig:logloglinearlog}
\end{figure*}
By carefully examining all the data sets, the evidence is found that the CCDFs of the CITDs follow power-law decay up to some critical point. In Figure \ref{fig:loglog}, the log-log plots are linear up to some point between $10^{-3}$ to $10^{-2}$. The linear-log plot of the CCDFs, as shown in Figure \ref{fig:linearlog}, exhibits linearity beyond the critical point, which confirms that the CITD has an exponential tail.

\par
In the literature, this behavior is modeled as a mixture of exponential distributions, or a hyper-exponential distribution, e.g. \cite{karagiannis2010power}. Additionally, according to Bernstein's theorem \cite{feller2008introduction}, every complete monotone probability distribution function (PDF) is a mixture of exponential PDFs. Although there is no clear evidence showing that the CITD is complete monotone, it can still be approximated as a two-stage PDF which is comprised of a power-law and an exponential decay. To closely approximate a power-law decay usually requires large number of exponentials \cite{feldmann1998fitting}. But a power-law distribution with an exponential decay requires much less exponentials. This feature makes it possible to consider opportunistic secondary access strategies with the hyper-exponential distribution function as a model for primary traffic idle time.

\subsection{Hyper-exponential approximation}\label{sec:3}
The distribution function corresponding to the hyper-exponential distribution is simply the weighted sum of several exponential distributions
\begin{equation}\label{eq:hyperexp}
    f(t)=\sum_{i=1}^{N}\alpha_i\lambda_ie^{-\lambda_it},\ t\geq 0,
\end{equation}
where $N$ is the number of exponentials, $\lambda_i$'s are the corresponding exponential parameters, $\sum_{i=1}^{N}\alpha_i=1$, $\alpha_i\geq 0,\forall i$.

\par
We can estimate the parameters if the hyper-exponential distribution model corresponds to any given traffic pattern using the expectation-maximization (EM) algorithm. Readers are referred to \cite{bilmes1998gentle} for more details about the expectation maximization (EM) algorithm. Two major observations of the EM estimation results are made. Firstly, the number of exponentials needed to approximate the empirical CITD is quite small. In particular, with the data collected, three or four exponentials can produce a reasonably good approximation to the empirical CDF. Increasing the number of exponentials beyond four does not improve the estimate significantly. This is also partly because the tail decays exponentially. Secondly, the estimated $\lambda_i$'s are far apart from each other. For example, the smallest $\lambda$ might be 10 to 100; while the largest $\lambda$ can be 5,000 to 10,000. These two observations will be crucial later for the development of effective secondary transmissions strategies.

\subsection{SMMPP interpretation}\label{sec:3a}
In the semi-Markov modulated Poisson process, the arrival rate is governed by the evolution of a semi-Markov process with discrete state space. Each state is associated with a constant arrival rate. This model is very popular for bursty traffic modeling \cite{heffes1986markov, leland1994self}. Figure \ref{fig:smmpptraffic} shows a simple illustration of a two-state SMMPP. In the first period, the channel is in state 1 where the packets arrive with arrival rate $\lambda_1$. Then it jumps to state 2 with arrival rate $\lambda_2$. Let the steady-state probabilities of the two states be $\alpha_1\geq 0$ and $\alpha_2\geq 0$ with $\alpha_1+\alpha_2=1$. Since the inter-arrival times in the two states follow exponential distributions with exponents of $\lambda_1$ and $\lambda_2$, respectively, one can think of the two channel states to be modulated by the behavior of two users. User 1 in state 1 is video-streaming; hence generates more traffic. User 2 is web browsing and generates less traffic. If the inter-arrival times between state transitions are ignored, the steady-state inter-arrival time distribution follows the weighted sum of the two exponentials $f(t)=\alpha_1e^{-\lambda_1t}+\alpha_2e^{-\lambda_2t}$. Furthermore, if the lengths of the packets are ignored, this distribution would correspond to the distribution of the channel idle times. One can easily extend this two-state semi-Markov process to arbitrary number of states and obtain similar results. Hence, one can see that the SMMPP modeling of the packet arrivals and the hyper-exponential distribution modeling of the CITD are closely related.
\begin{figure}[!h]
\centering
\includegraphics[width=3.5in]{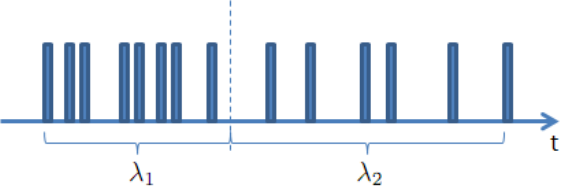}
\caption{A simple example of a two-state Markov modulated Poisson process.}
\label{fig:smmpptraffic}
\end{figure}

\par
In the hyper-exponential distribution, the knowledge of the intensity matrix of the semi-Markov process is not required. On the one hand, knowing the exact state of the semi-Markov process will benefit the design of the secondary transmission strategy. On the other hand, the transmission strategy based on the estimated SMMPP might not be robust to estimation errors. There is not enough evidence showing that the primary traffic follows the SMMPP model with a constant semi-Markov intensity matrix. The main focus is to design the secondary transmission strategies based only on the hyper-exponential distribution.

\section{Secondary Transmission Strategy Preliminaries}\label{sec:3}
This section is devoted to the development of some basic assumptions and concepts related to the secondary transmission strategy.

\par
Firstly, the SU is assumed to be \emph{full-duplex}, \ie the SU can transmit signals and perform spectrum sensing at the same time. This means that if there is a packet collision, the SU can find out immediately after certain detection delay. This seemly unrealistic assumption actually is quite possible given current development in the analog front-end technologies. In \cite{duarte2010full}, it is stated that 40 dB to 80 dB rejection of self-interference is achievable in an analog front-end of a full-duplex radio. Authors in \cite{ahmed2012simultaneous, duarte2012design} improved the self-interference cancellation in the analog front-end. Authors in \cite{choi2010achieving} proposed a novel method which cancels self-interference at the receiving antenna from two transmitting antennas.

\par
Secondly, it is assumed that the SU can perform perfect spectrum sensing when the SU is either idle or transmitting. In practice, the cognitive radio works in slotted time. The spectrum sensing result is produced at the end of each slot. If the slot is sufficiently long or the cognitive radio samples sufficiently fast, a near perfect spectrum sensing result is possible. Nevertheless, the result is not perfect since it is at least a one-slot delayed result. Moreover, spectrum sensing can still have errors. This imperfection can lead to decreased SU performance or increased PU impact. However, the primary channel statistics is not significantly impacted by sensing errors. As a result, the transmission strategy design, which depends only on the channel statistics, remains valid even under imperfect spectrum sensing. The overall performance degradation due to imperfect spectrum sensing is out of the scope of this paper.

\par
The starting time of each channel idle time is denoted as time $0$. The period from the beginning of one channel idle time to the beginning of the next channel idle time is called a \emph{cycle}. The transmission strategy is defined as follows.
\begin{definition}[Secondary transmissions strategy]
At time $t\geq 0$, the SU accesses the channel for an infinitesimal duration with probability $x(t)$ if the channel is sensed to be idle. Otherwise the SU keeps idle.
\end{definition}
The ``infinitesimal duration'' is a mathematical concept. In practice, time can be slotted and the SU makes transmission decision for each slot. A similar definition with slotted time can also be developed. The secondary signal could collide with the primary packets which results in performance loss. The collision situations in slotted time are described next. Due to the assumptions made earlier, there are only two situations that can possibly happen. Either the secondary packet does not overlap with the primary packet (left part of Figure \ref{fig:collision}) leading to a successful secondary transmission, or the secondary packet overlaps with the primary packet in just one slot (right part of Figure \ref{fig:collision}), resulting in a collision. In this situation, it is highly likely that the primary packet cannot be recovered since its packet header is corrupted. The primary packet is \emph{collided} in this case. The probability of collision is defined from the primary users' perspective.
\begin{definition}[Probability of collision]
The probability of collision is defined as the percentage of \emph{collided} primary packets due to secondary transmissions.
\end{definition}
One can immediately see that if $x(t)=1$ $\forall t\geq 0$ when the channel is sensed to be idle, the probability of collision is $1$. To characterize the amount of time that the SU accesses the channel, the concept of secondary capacity associated with a particular secondary transmissions strategy is defined as follows.
\begin{definition}[Secondary capacity]
Secondary capacity is defined as the average SU access time during each channel idle time when a particular secondary transmission strategy is applied.
\end{definition}
Also, if $x(t)=1$ $\forall t\geq 0$ when the channel is sensed to be idle, the secondary capacity is the mean of the channel idle times. Given the CITD $f(t)$, the secondary capacity with strategy $x(t)$ is
\begin{align}
    \int_0^{\infty}x(t)(1-F(t))dt,\nonumber
\end{align}
and the probability of collision is
\begin{align}
    \int_0^{\infty}x(t)f(t)dt.\nonumber
\end{align}
The SU can use different method to design $x(t)$. For example, if the SU knows that the coming or current channel idle time is generated by a certain arrival rate $\lambda_i$, the SU can use a strategy that would boost its secondary capacity. To assist such design, the concept of primary traffic state given the hyper-exponential distribution function \ref{eq:hyperexp} is defined as follows.
\begin{definition}[Primary traffic state]
If the coming or current channel idle time is generated by a certain arrival rate $\lambda_i$, the primary traffic state is $i$.
\end{definition}
There are $N$ states in total.
\begin{figure}[!t]
\centering
\includegraphics[width=3.5in]{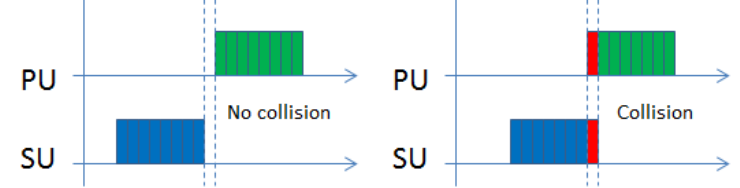}
\caption{Successful secondary transmission and packet collision.}
\label{fig:collision}
\end{figure}

\section{Primary Traffic State Information}\label{sec:4}
This section discusses the performance that the SU can achieve under the assumption that there does exist a SMMPP in the background that controls primary traffic. The SU performance is characterized by the secondary capacity. Depending on how much information the SU has about the exponentials of the CITD at a particular time, secondary capacity varies due to changes in the transmission strategy that the SU uses. This information is referred to as primary traffic state information (PTSI).
\subsection{SMMPP model}
Here, a formal description of the SMMPP model is provided for analysis purposes. Consider $N$ arrival rates $\lambda_1,\lambda_2,\ldots,\lambda_N$. During a particular time period, the channel idle times follow the exponential distribution with a certain $\lambda=\lambda_i,\ i\in\{1,2,\ldots,N\}$. This corresponds to the primary traffic being in state $i$. The transition between states is governed by a recurrent semi-Markov process. To avoid ambiguity, the sojourn time in each state is defined as an integer number of channel idle times. We will denote by $P=\{p_{ij}\}_{i,j=1}^N$ the probability transition matrix, where $p_{ij}$ is the transition probability from state $i$ to state $j$, and $\alpha_i,\ i=1,2,\ldots,N$ the steady-state distribution of the Markov chain.
\subsection{Secondary capacity and PTSI classification}
In the remainder of the paper, we will use the probability of collision to constrain the SU transmission strategy. Then the secondary capacity is a function of the probability of collision $\eta$, which we denote by $T(\eta)$, where $0\leq\eta\leq 1$. We can classify the PTSI into one of three classes, depending on the availability of information about $\lambda_i$'s, $\alpha_i$'s and $P$ for a particular channel idle time. The first class is the statistical PTSI, wherein the SU knows the parameters of the hyper-exponential distribution. The second class is the Markov-level PTSI, wherein the SU knows not only the hyper-exponential distribution but also the Markov transition probabilities in $P$. The third class is the full PTSI, wherein the SU knows the exact arrival rate $\lambda_i$'s. With this class of PTSI, the SU has the most information about the primary traffic. Presumably, the secondary capacity should be the highest with the full PTSI, and the lowest with the statistical PTSI. The three classes are discussed in the sequel.

\subsubsection{Statistical PTSI}
When the SU only knows the parameters $\alpha_i$'s and $\lambda_i$'s, then the SU has \emph{statistical knowledge of the PTSI}. At the beginning of a channel idle time, the SU does not know the current state of the primary traffic. As a result, the SU only knows that the distribution function of the duration of channel idle times as in \eqref{eq:hyperexp}. The statistical PTSI is a very practical assumption for the cognitive radio design. Assuming that the SU has only statistical knowledge of the PTSI, we can develop the following strategies. First, consider a straightforward strategy as follows.
\begin{strategy}[statistical PTSI one-shot strategy]\label{str1}
When a channel idle time comes, the SU transmits from time 0, and stop whenever it detects the primary signal or reaches the time $\tau(\eta)$, which is the maximum duration of transmission determined by the probability of collision constraint:
\begin{align}\label{sec3eq1prime}
    \eta&=\sum_{i=1}^{N}\alpha_i(1-e^{-\lambda_i\tau(\eta)})
\end{align}
and
\begin{align}
    \tau(\eta)\approx\frac{\eta}{\sum_{i=1}^{N}\alpha_i\lambda_i},\nonumber
\end{align}
where the approximation comes from the Taylor expansion for small $\tau(\eta)$.
\end{strategy}
For this straightforward strategy, the secondary capacity is:
\begin{align}\label{sec3eq111}
    T^{\text{stat-PTSI,os}}(\eta)&=\int_{0}^{\tau(\eta)}(1-F(t))dt\nonumber\\
    &=\sum_{i=1}^N\frac{\alpha_i}{\lambda_i}(1-e^{-\lambda_i\tau(\eta)}),
\end{align}
where $F(t)=\int_0^{t}f(t)dt$. We will refer to this strategy as the one-shot strategy. The variable $\tau(\eta)$ is the solution for \eqref{sec3eq1prime}.

\par
Nevertheless, this strategy does not necessarily yield the largest secondary capacity with the statistical PTSI. To explore the longest transmission in a channel idle time given $f(t)$, the SU might need to adopt an intermittent transmission strategy. Let $0\leq x(t)\leq 1$ be the transmission strategy. Consider the probability of collision constrained problem
\begin{align}
    \max\limits_{x(t),t\geq 0}&\ \int_0^{\infty}x(t)(1-F(t))dt,\nonumber\\
    s.t.&\ \int_0^{\infty}x(t)f(t)dt\leq\eta.\nonumber
\end{align}
The optimal solution to this problem yields the largest secondary capacity, which is the following strategy.
\begin{strategy}[statistical PTSI optimal strategy]\label{str2}
The SU first waits from the beginning of each channel idle times until a time $\tau$. If the channel has been idle from the beginning of an idle interval, the SU accesses the channel until it detects a primary packet. We select $\tau$ being such that $1-F(\tau)=\eta$.
\end{strategy}
It is quite noticeable that strategy \ref{str1} and \ref{str2} are in some sense ``opposite'' strategies in that strategy \ref{str1} utilizes the beginning part of channel idle times while strategy \ref{str2} utilizes the tail parts. To see why strategy \ref{str2} results in the largest secondary capacity, one could examine the optimal solution of the probability of collision constrained problem, which is studied in \cite{huang2009optimal}:
\begin{align}\label{sec3statopt}
    x^{\ast}(t)=
    \begin{cases}
    1,\ \frac{1-F(t)}{f(t)}>a^{\ast},\\
    p^{\ast},\ \frac{1-F(t)}{f(t)}=a^{\ast},\\
    0,\ \frac{1-F(t)}{f(t)}<a^{\ast},
    \end{cases}
\end{align}
where $a^{\ast}$ is determined as
\begin{align}
    a^{\ast}=\inf\Big\{a:\int_{t:\frac{1-F(t)}{f(t)}>a}f(t)dt\leq\eta\Big\}.\nonumber
\end{align}
If $\int_{t:\frac{1-F(t)}{f(t)}>a^{\ast}}f(t)dt=\eta$, $p^{\ast}=0$; otherwise
\begin{align}
    p^{\ast}=\frac{\eta-\int_{t:\frac{1-F(t)}{f(t)}>a^{\ast}}f(t)dt}{\int_{t:\frac{1-F(t)}{f(t)}=a^{\ast}}f(t)dt}.\nonumber
\end{align}
This solution is developed for $f(t)$ of a general form. When $f(t)=\sum_{i=1}^{N}\alpha_i\lambda_ie^{-\lambda_it}$, this solution has a special form. It is seen that the transmission opportunities should be allocated to the moments with large $\frac{1-F(t)}{f(t)}$. One can examine that the derivative of $\frac{1-F(t)}{f(t)}$ is nonnegative:
\begin{align}
    \frac{\partial}{\partial t}\frac{1-F(t)}{f(t)}\geq 0.\nonumber
\end{align}
According to the optimal solution \eqref{sec3statopt}, the optimal strategy should allocate transmission opportunity to the times with the largest possible $\frac{1-F(t)}{f(t)}$. Because $\frac{1-F(t)}{f(t)}$ is monotone increasing, the optimal strategy is to transmit in the tail part of channel idle times. Since $1-F(\tau)=\eta$, $\tau$ can be written as a function of $\eta$, $\tau=\tau(\eta)$. Then the secondary capacity with statistical PTSI is indeed
\begin{align}\label{sec3eq3}
    T^{\text{stat-PTSI,opt}}(\eta)=\int_{\tau(\eta)}^{\infty}1-F(t)dt.
\end{align}
One needs to solve a hyper-function to get the value of $\tau(\eta)$ for a particular $\eta$. Therefore there is no close form expression for \eqref{sec3eq3}.

\subsubsection{Markov-level PTSI}
When the SU knows the parameters $\alpha_i$'s, $\lambda_i$'s and the transition probability matrix $P$, the SU has the \emph{Markov-level PTSI}. In theory, with $\alpha_i$'s, $\lambda_i$'s and $P$, the changes from one state to another could be detected. This is known as the Poisson disorder problem \cite{peskir2002solving}. However, to perform tracking of the SMMPP and accessing the idle times at the same time is very challenging, since a change of state might happen while the SU is still detecting the previous state. Basically, it is possible to use this strategy when the primary traffic state changes slowly.

\par
Under this condition, the SU could adopt different transmission strategies since it has the information about the primary traffic state in the previous channel idle time. Given the previous state $i$, the probability distribution function of the upcoming channel idle time is
\begin{align}
    f_i(t)=\sum_{j=1}^{N}p_{ij}\lambda_je^{-\lambda_jt}.\nonumber
\end{align}
Given the knowledge that the primary traffic is in state $i$, similar to the one-shot strategy with statistical PTSI, the SU can choose a transmission strategy according to $f_i$.
\begin{strategy}[Markov-level PTSI one-shot balanced strategy]
When the previous state of primary traffic is $i$, the SU transmits from the start of the next channel idle time until it detects a primary signal or reaches the maximum transmission duration $\tau_i(\eta)$, where
\begin{align}
    \eta&=\sum_{j=1}^{N}p_{ij}(1-e^{-\lambda_j\tau_i(\eta)}).\nonumber
\end{align}
We can solve for $\tau(\eta)$ explicitly as
\begin{align}
    \tau_i(\eta)\approx\frac{\eta}{\sum_{j=1}^{N}p_{ij}\lambda_j},\nonumber
\end{align}
for small $\eta$.
\end{strategy}
This is the one-shot balanced transmission strategy since it requires the same probability of collision for each state $i$. Then the secondary capacity with this strategy is
\begin{align}\label{sec3eq9}
    T^{\text{Markov-PTSI,os-bal}}(\eta)=\sum_{i=1}^N\frac{\alpha_i}{\sum_{j=1}^{N}p_{ij}\lambda_j}\eta.
\end{align}

\par
Besides the balanced strategy, one can associate different probability of collision constraints to different $i$ rather than distribute them equally among all states. This leads to the following optimization problem
\begin{align}\label{sec3prob1}
    \max\limits_{\tau_i\geq 0}&\ \sum_{i=1}^N\alpha_i\sum_{j=1}^N\frac{p_{ij}}{\lambda_j}(1-e^{-\lambda_j\tau_i}),\nonumber\\
    s.t.&\ 1-\sum_{i=1}^N\alpha_i\sum_{j=1}^Np_{ij}e^{-\lambda_j\tau_i}\leq\eta.
\end{align}
However, this problem is not a convex optimization problem since the constraint specifies a non-convex region. One can resort to a suboptimal strategy that allocates the transmission opportunities to the states where the SU could benefit the most.
\begin{strategy}[Markov-level PTSI one-shot suboptimal strategy]
Assume without loss of generality that there exists an order $i_1, i_2, \ldots, i_N$ such that $\sum_{j=1}^N\frac{p_{i_1j}}{\lambda_j}\leq\ldots\leq\sum_{j=1}^N\frac{p_{i_Nj}}{\lambda_j}$ and there exists an $m$ such that $\sum_{k=1}^m\alpha_{i_k}<\eta\leq\sum_{k=1}^{m+1}\alpha_{i_k}$.
\begin{itemize}
  \item When the previous primary traffic state is $i_k$, $1\leq k\leq m$, the SU transmits from the start of each channel idle time until it detects the primary signal.
  \item When the previous primary traffic state is $i_{m+1}$, the SU transmits from the start of each channel idle time until it detects the primary signal or reaches the limit of $\tau_{i_{m+1}}$.
  \item The SU does not transmit at all in all other states.
\end{itemize}
$\tau_{i_{m+1}}$ is determined by the hyper-function
\begin{align}
    \frac{\eta-\sum_{k=1}^m\alpha_{i_k}}{\alpha_{i_{m+1}}}=1-\sum_{j=1}^Np_{i_{m+1}j}e^{-\lambda_j\tau_{i_{m+1}}}.\nonumber
\end{align}
\end{strategy}
The secondary capacity is
\begin{align}\label{sec3eq8}
    T^{\text{Markov-PTSI,os-opt}}(\eta)=\sum_{k=1}^{m}\alpha_{i_{k}}\sum_{j=1}^{N}\frac{p_{{i_{k}}j}}{\lambda_j}+\alpha_{i_{m+1}}\tau_{i_{m+1}}.
\end{align}

\par
The SU can also perform an optimal strategy similar to Strategy \ref{str2}, except that $f(t)$ is replaced with $f_i(t)$. Consider the optimization problem when the previous primary traffic is in state $i$
\begin{align}
    \max\limits_{x_i(t),t\geq 0}&\ \int_0^{\infty}x_i(t)(1-F_i(t))dt,\nonumber\\
    s.t.&\ \int_0^{\infty}x_i(t)f_i(t)dt\leq\eta,\nonumber
\end{align}
Let $x_i^{\ast}(t; \eta)$ be the optimal solution and $T_i(\eta)=\int_0^{\infty}x_i^{\ast}(t; \eta)\big(1-F_i(t)\big)dt$. Here, $x_i^{\ast}(t; \eta)$ is similar to \eqref{sec3statopt}. The strategy is
\begin{strategy}[Markov-level PTSI optimal balanced strategy]
Denote the beginning of the channel idle times as $t=0$. When the previous primary traffic is in state $i$, the SU transmits at time $t$ with probability $x_i^{\ast}(t; \eta)$ for an infinitesimal duration.
\end{strategy}
The secondary capacity is
\begin{align}\label{sec3eq7}
    T^{\text{Markov-PTSI,bal}}(\eta)=\sum_{i=1}^{N}\alpha_iT_i(\eta).
\end{align}

\par
Rather than distributing the probability of collision equally over all states, the SU could be more greedy by exploring the optimal allocation of the probability of collisions over different states. Consider this problem
\begin{align}\label{sec3prob2}
    \max\limits_{x_i(t),\forall i,t\geq 0}&\ \sum_{i=1}^N\alpha_i\int_0^{\infty}x_i(t)(1-F_i(t))dt,\nonumber\\
    s.t.&\ \sum_{i=1}^N\alpha_i\int_0^{\infty}x_i(t)f_i(t)dt\leq\eta.
\end{align}
The solution to this problem could be found out more clearly through the slotted version of it. Let $p_{ij}=F_i(j\Delta)-F_i((j-1)\Delta)$ and $Q_{ij}=1-F_i((j-1)\Delta)$ for $j=1,2,\ldots,K$ and $i=1,\ldots, N$. The slotted version of this problem is
\begin{align}
    \max\limits_{x_i(j),\forall i, \forall j}&\ \sum_{i=1}^N\alpha_i\sum_{j=1}^{K}Q_{ij}\Delta x_i(j),\nonumber\\
    s.t.&\ \sum_{i=1}^N\alpha_i\sum_{j=1}^{K}p_{ij}x_i(j)\leq\eta.\nonumber
\end{align}
The optimal choice is to allocate the transmission opportunities $x_i(j)$ to the slots with the largest $\frac{Q_{ij}}{p_{ij}}$ possible, for all $i=1,\ldots,N$ and $j=1,\ldots,K$. This is essentially the same problem as the slotted strategy in the statistical PTSI. In the continuous case, the optimal solution is as the following.
\begin{proposition}\label{sec3prop1}
The optimal solution of \eqref{sec3prob2} is
\begin{align}\label{sec3markovopt}
    x_i^{\ast}(t)=
    \begin{cases}
    1,\ \frac{1-F_i(t)}{f_i(t)}>a^{\ast},\\
    p^{\ast},\ \frac{1-F_i(t)}{f_i(t)}=a^{\ast},\\
    0,\ \frac{1-F_i(t)}{f_i(t)}<a^{\ast},
    \end{cases}
\end{align}
for all $i$. Here, $a^{\ast}$ is such that
\begin{align}
    a^{\ast}=\inf\Big\{a:\sum_{i=1}^N\alpha_i\int_{t:\frac{1-F_i(t)}{f_i(t)}>a}f_i(t)dt\leq\eta\Big\},\nonumber
\end{align}
and $p^{\ast}$ is
\begin{align}
    p^{\ast}=\frac{\eta-\sum_{i=1}^N\alpha_i\int_{t:\frac{1-F_i(t)}{f_i(t)}>a}f_i(t)dt}{\sum_{i=1}^N\alpha_i\int_{t:\frac{1-F_i(t)}{f_i(t)}=a^{\ast}}f_i(t)dt}.\nonumber
\end{align}
\end{proposition}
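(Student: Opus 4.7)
The plan is to recognize (\ref{sec3prob2}) as an infinite-dimensional linear program in the variables $\{x_i(t)\}_{i=1}^N$ constrained to the pointwise box $[0,1]$, and to solve it by the same Neyman--Pearson / Lagrangian argument that produced (\ref{sec3statopt}) in the statistical PTSI case. The only new ingredient is that the relevant measure is the mixture $\sum_i \alpha_i f_i$ rather than a single density $f$, so the threshold $a^*$ must be defined against this mixture.

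First I would introduce a multiplier $a \geq 0$ for the collision constraint and form the Lagrangian
\[
\mathcal{L}(\{x_i\}, a) = \sum_{i=1}^N \alpha_i \int_0^\infty x_i(t)\bigl[(1-F_i(t)) - a\, f_i(t)\bigr]\, dt + a\eta.
\]
Because the integrand is linear in each $x_i(t)\in[0,1]$, $\mathcal{L}$ decouples pointwise in $(i,t)$ and is maximized by the bang--bang rule $x_i(t)=1$ when $(1-F_i(t))/f_i(t)>a$ and $x_i(t)=0$ when $(1-F_i(t))/f_i(t)<a$, with any value in $[0,1]$ admissible on the level set $(1-F_i)/f_i=a$. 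This already matches the structure of (\ref{sec3markovopt}).

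Next I would pin down $a^*$. The total collision probability produced by the hard-threshold rule, $G(a) \defn \sum_i \alpha_i \int_{\{(1-F_i)/f_i > a\}} f_i(t)\, dt$, is nonincreasing in $a$, so choosing $a^* = \inf\{a: G(a)\leq\eta\}$ is the smallest multiplier making the bang--bang solution feasible; this is exactly the expression in the proposition. If $G(a^*)<\eta$, the remaining budget $\eta-G(a^*)$ is filled by randomizing with probability $p^*$ on the level set, and solving the linear equation for the total collision probability under this randomization gives the stated formula for $p^*$. Optimality then follows from the standard weak-duality chain: for any feasible $\{\tilde x_i\}$, the objective at $\tilde x$ is at most $\mathcal{L}(\tilde x, a^*)$, which by pointwise optimality is at most $\mathcal{L}(x^*, a^*)$, and complementary slackness ($G(a^*)$ plus the level-set contribution equals $\eta$) collapses the latter to the objective at $x^*$.

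The main obstacle is the behaviour of the level set $\{t:(1-F_i(t))/f_i(t)=a^*\}$: if it had positive Lebesgue measure for some $i$, one would need to check that a single constant $p^*$ applied uniformly across all states on their respective level sets both saturates the constraint and yields a measurable $x_i^*$. For the hyper-exponential $f_i$ considered here this is essentially harmless, since $(1-F_i)/f_i$ is smooth and was already shown to be monotone in the statistical PTSI analysis, so the level set is typically a single point of measure zero and the randomization is a formal placeholder; a careful writeup should simply record this to keep the proposition valid in full generality.
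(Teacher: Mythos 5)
Your proof is correct, but it follows a genuinely different route from the paper's. You treat \eqref{sec3prob2} as one joint infinite-dimensional linear program, attach a single multiplier $a$ to the aggregate collision constraint, and let the pointwise bang--bang maximization of the Lagrangian produce the \emph{common} threshold $a^\ast$ across all states automatically; optimality then follows from weak duality plus complementary slackness. The paper instead proceeds in two steps: it first invokes the single-state result of \cite{huang2009optimal} to assert that each $x_i^\ast$ is a threshold rule with its own level $a_i^\ast$, and then runs an exchange argument (equation \eqref{app2_eq1}) showing that if $a_k^\ast>a_l^\ast$ one can replace both by an intermediate common level $b$ without decreasing the objective while keeping the same collision budget, so unequal thresholds cannot be strictly optimal. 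Your Lagrangian route is more self-contained (it does not need the per-state structural result as a separate input), it delivers the randomization constant $p^\ast$ and the saturation of the constraint in one stroke, and your closing remark about the level set $\{t:(1-F_i(t))/f_i(t)=a^\ast\}$ is a point the paper glosses over --- note that when some $f_i$ degenerates to a single exponential the ratio is constant and the level set is all of $[0,\infty)$, so $p^\ast$ is not merely a formal placeholder there, though your formula still saturates the budget. The paper's exchange argument, for its part, makes the economic intuition explicit (equalize the marginal value-to-cost ratio across states) and mirrors the slotted heuristic given just before the proposition. One small point to tighten in your writeup: complementary slackness needs the constraint to be active at the optimum, which holds for $\eta<1$ because $1-F_i(t)>0$ everywhere, so any slack could be profitably consumed.
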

Analogous to the slotted case, the SU transmits in the time that has the largest $\frac{1-F_i(t)}{f_i(t)}$ possible when the previous primary traffic state is $i$. Proof of this proposition is in Appendix \ref{app2}. The strategy with optimal allocation over time and states can be summarized as
\begin{strategy}[Markov-level PTSI optimal strategy]
Let the beginning of each idle time be $t=0$. When the previous primary traffic state is $i$, the SU transmits for an infinitesimal duration at time $t$ with probability $x_i^{\ast}(t)$.
\end{strategy}
The secondary capacity with this strategy is
\begin{align}
    T^{\text{Markov-PTSI,opt}}(\eta)=\sum_{i=1}^N\alpha_i\int_0^{\infty}x_i^{\ast}(t)(1-F_i(t))dt.
\end{align}

\subsubsection{Full PTSI}
When the SU knows the state of the primary traffic at the beginning of each channel idle time, then the SU has \emph{full PTSI}. With the full PTSI, the SU knows for sure that which exponential distribution generates the upcoming channel idle time. This assumption is practical under the condition that the primary traffic state changes very slowly. However, with full PTSI, the SU can utilize the channel idle times in the most efficient way, which provides an upper bound on the secondary capacity using opportunistic accessing strategies.

\par
Firstly, the SU can adopt a balanced strategy.
\begin{strategy}[full PTSI balanced strategy]
The SU starts to transmit at the beginning of each channel idle time until it detects the primary signal or it reaches the transmission duration limit $\tau_i$ which is determined by the probability of collision
\begin{align}
    1-e^{-\lambda_i\tau_i}=\eta.\nonumber
\end{align}
This implies that the average SU transmission duration is given by
\begin{align}
    \tau_i(\eta)&=\frac{1}{\lambda_i}\log\frac{1}{1-\eta}\nonumber\\
    &=\frac{1}{\lambda_i}(\frac{\eta}{1-\eta}+o(\eta^2))\nonumber\\
    &\approx\frac{\eta}{\lambda_i}.\nonumber
\end{align}
The approximation comes from Taylor expansion with small $\eta$.
\end{strategy}
The secondary capacity under this strategy is
\begin{align}\label{sec3eq11}
    T^{\text{full-PTSI,bal}}(\eta)=\sum_{i=1}^N\frac{\alpha_i}{\lambda_i}\eta.
\end{align}

\par
Similarly, the SU could distribute the probability of collision optimally over all states. This results in an optimization problem similar to \eqref{sec3prob2}, except that $f_i(t)$ is an exponential distribution function instead of a hyper-exponential distribution function. According to the results in proposition \ref{sec3prop1}, the SU should allocate the transmission opportunities to the states with the largest possible $\frac{1-F_i(t)}{f_i(t)}=\frac{1}{\lambda_i}$. The optimal allocation strategy is as follows.
\begin{strategy}[full PTSI optimal strategy]
Assume without loss of generality that $\lambda_1\leq\ldots\leq\lambda_N$, and there exists an $m$ such that $\sum_{i=1}^m\alpha_i<\eta\leq\sum_{i=1}^{m+1}\alpha_i$.
\begin{itemize}
  \item When the primary traffic is in state $i\leq m$, the SU transmits until it detects the primary signal.
  \item When the primary traffic is in state $m+1$, the SU transmits until it detects the primary signal or reaches the transmission duration limit $\tau_{m+1}=\frac{1}{\lambda_{m+1}}\log\frac{1}{1-\eta}$.
  \item In all other states, the SU does not transmit at all.
\end{itemize}
\end{strategy}
The secondary capacity is
\begin{align}\label{sec3eq10}
    T^{\text{full-PTSI,opt}}(\eta)=\sum_{i=1}^{j}\frac{\alpha_i}{\lambda_i}+\frac{1}{\lambda_{j+1}}(\eta-\sum_{i=1}^j\alpha_i).
\end{align}

\subsection{Comparison of the three types of PTSI}
Among the three types of PTSI, it is most convenient for the SU to have the statistical PTSI. The Markov-level PTSI requires the SU not only to have the statistical PTSI, but also the transition probabilities of the primary traffic states, which could be estimated from the channel idle time data. However, if the primary traffic is fast-changing, the SU could not correctly track the states. Then knowing the transition probabilities would not help improving the secondary capacity. When the primary traffic state changes slowly, the SU needs to perform $N-1$ likelihood ratio tests for every channel idle time. This requirement is challenging but not impossible. The difference between the full PTSI and the Markov-level PTSI is that the full PTSI means the prior information of the primary traffic state of the upcoming channel idle time is known. This requirement is clearly unrealistic in the fast-changing case, but it can be used in the slow-changing case. The secondary capacity with full PTSI provides a reasonable upper bound for strategies with statistical or Markov-level PTSI.

More knowledge about the primary traffic usually results in more transmission opportunities. This is easily recognized by looking at the one-shot balanced strategies with the three types of PTSI. Because of the convexity of the function $f(t)=\frac{1}{t}$, it is obvious that $T^{\text{full-PTSI,bal}}(\eta)\geq T^{\text{stat-PTSI,os}}(\eta)$ because of Jensen's inequality. With slow-changing primary traffic state, the transition probability $p_{ij}$ is small for $j\neq i$. Then the approximation $\sum_{k=1}^{N}p_{ik}\lambda_k\approx\lambda_i$ is valid. Hence, we have $T^{\text{Markov-PTSI,os-bal}}(\eta)\approx T^{\text{full-PTSI,bal}}(\eta)$.

\par
Compared with the one-shot strategies, the gain using the optimal strategy can be large. In the slotted optimal transmission strategy with statistical PTSI, the quantity $\frac{Q_i}{p_i}$ has the meaning of the ``value-to-cost'' ratio, \ie the ratio between the secondary capacity and the probability of collision when the SU transmits in slot $i$. The slotted optimal transmission strategy suggests that the SU should always try to use the slots with the largest value-to-cost ratio. When the slot length $\Delta$ goes to zero we get
\begin{align}
    \lim\limits_{\Delta\rightarrow 0}\frac{Q_i\Delta}{p_i}&=\lim\limits_{\Delta\rightarrow 0}\frac{Q_i}{\frac{p_i}{\Delta}}\nonumber\\
    &=\frac{1-F(t)}{f(t)}.\nonumber
\end{align}
This is consistent with the optimal strategy in the continuous case which requires that the SU should only transmit at the times that $\frac{1-F(t)}{f(t)}$ is above a certain level. One can write explicitly that
\begin{align}
    \frac{1-F(t)}{f(t)}=\frac{\sum_{i=1}^{N}\alpha_ie^{-\lambda_it}}{\sum_{i=1}^{N}\alpha_i\lambda_ie^{-\lambda_it}}.
\end{align}
When $t\rightarrow 0$, $\frac{1-F(t)}{f(t)}\approx\frac{1}{\sum_{i=1}^{N}\alpha_i\lambda_i}$. When $t$ is large enough, $\frac{1-F(t)}{f(t)}\approx\frac{1}{\min\limits_{i}\lambda_i}$. So the difference between the one-shot strategies and the optimal strategy is approximately the difference between $\frac{1}{\sum_{i=1}^{N}\alpha_i\lambda_i}$ and $\frac{1}{\min\limits_{i}\lambda_i}$, which can lead to a big gap implied by the result in the previous section.

\par
Without analytical forms of the secondary capacity with the optimal transmission strategies, it is not straightforward to compare the performances in the three types of PTSI. A simulation study is provided in the sequel. Consider three traffic levels $\lambda_1=5$, $\lambda_2=100$ and $\lambda_3=6000$. The transition probability matrix
\begin{align}
    P=\begin{bmatrix}
        0.9 & 0.05 & 0.05 \\
        0.05 & 0.9 & 0.05 \\
        0.05 & 0.05 & 0.9 \\
      \end{bmatrix},\nonumber
\end{align}
and $\alpha_1=\alpha_2=\alpha_3=\frac{1}{3}$. The secondary capacities of the 8 strategies are plotted in Figure \ref{fig:sc}. Several observations are made through the secondary capacities. Firstly, the SU could always get a good secondary capacity by optimally allocating the transmission opportunities over time. The solid blue, green, red and black curves are the optimal strategies that yields the most secondary capacities. The optimal strategy with full PTSI yields the most secondary capacity. The optimal strategies with the statistical and Markov-level PTSI have similar secondary capacities, but both slightly lower than $T^{\text{full-PTSI,opt}}$. The optimal balanced strategy with the Markov-level PTSI is again a little lower, and in some cases, it is lower than $T^{\text{Markov-PTSI,os-opt}}$. Secondly, the optimal allocation over the primary traffic states with the Markov-level PTSI can bring a boost in the secondary capacity. One can compare $T^{\text{Markov-PTSI,opt}}$, $T^{\text{Markov-PTSI,bal}}$ and $T^{\text{Markov-PTSI,os-opt}}$, $T^{\text{Markov-PTSI,os-bal}}$. Thirdly, having more information does not necessarily mean higher secondary capacity. With the statistical PTSI, the SU can still get a performance close to the situation with the full PTSI when the right strategy is chosen.
\begin{figure}[!t]
\centering
\includegraphics[width=3.5in]{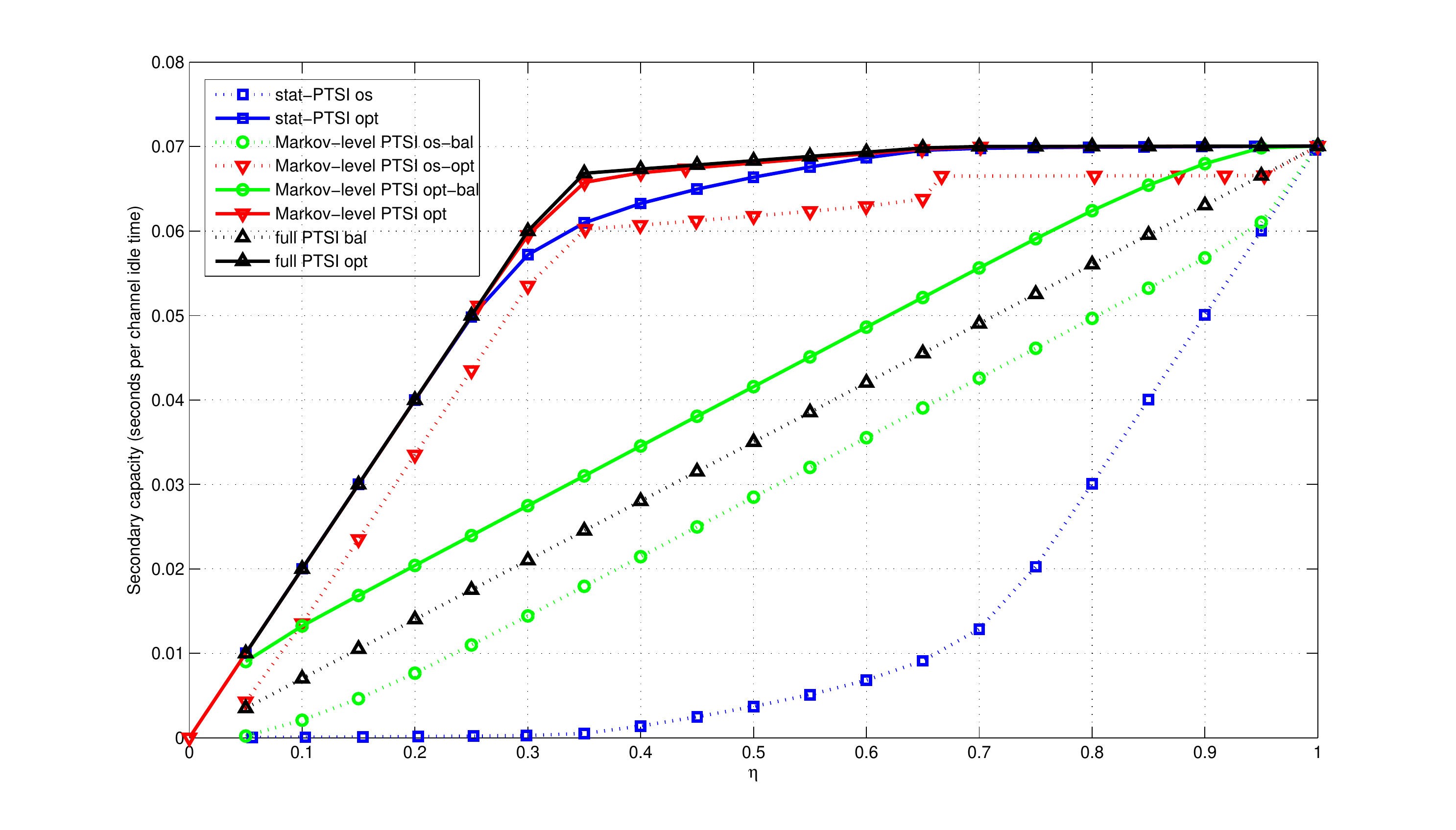}
\caption{Secondary capacities of the 8 strategies.}
\label{fig:sc}
\end{figure}

\par
In a practical situation, the SU can conveniently obtain the statistical PTSI. The full PTSI is not considered as a practical prerequisite. Although Markov chains could be estimated, it still puts an enormous burden on the SU. The optimal strategy with the statistical PTSI could reach a performance close to the optimal. One question then is, ``does there exist a strategy with the statistical PTSI that has simple solutions and is also robust to errors in $\alpha_i$'s and $\lambda_i$'s?'' The next section provides an affirmative answer.

\section{Secondary Transmission Under Non-stationary Primary Traffic}\label{sec:5}
The strategies with the statistical and Markov-level PTSI require some level of stationarity in the primary traffic. The stationarity in the statistical PTSI requires that the $\alpha_i$'s and $\lambda_i$'s remain static. The stationarity in the Markov-level PTSI requires not only the $\alpha_i$'s and $\lambda_i$'s remain static, but also the transition probabilities have to stay static. In practice, it is usually difficult to have static transition probability matrix. As a result, it is impractical to consider the Markov-level PTSI. In most situations, it is even difficult to have static $\alpha_i$'s and $\lambda_i$'s. In such situations, the performance of the strategies with the statistical PTSI will be affected in the sense that its probability of collision will significantly deviate from the designed value. If the primary network has a limit on the largest probability of collision it can tolerate, it is desirable that the SU keeps the probability of collision under the designed limit.

To overcome this drawback and maintain a stable probability of collision performance, a strategy that depends only on the $\lambda_i$'s but not the $\alpha_i$'s is devised. The variation of the probability of collision can be reduced since it can only be affected by changes in the $\alpha_i$'s. At the same time, it also reaches a secondary capacity that is comparable to the statistical PTSI optimal strategy.

\subsection{Multiple-shot Strategy}
The basic idea of the multiple-shot strategy is as follows. Assume without loss of generality that $\lambda_1\leq\lambda_2\leq\ldots\leq\lambda_N$. At the beginning of each channel idle time, which is denoted as time $t=0$, the only prior information the SU has is the coefficients of the hyper-exponential distribution function \eqref{eq:hyperexp}, i.e., the $\alpha_i$'s and $\lambda_i$'s. Instead of following the one-shot strategy, to be conservative, the SU first assume that this idle time follows the fastest exponential decay, i.e., exponential distribution with $\lambda_N$. Given such a guess, one can compute the maximum duration of secondary transmission $t_N$ such that the probability of collision is preserved
\begin{align}
    1-e^{-\lambda_Nt_N}=\eta,\nonumber
\end{align}
which yields
\begin{align}
    t_N&=\frac{1}{\lambda_N}\log\frac{1}{1-\eta}\nonumber\\
    &\approx\frac{\eta}{\lambda_N}.\nonumber
\end{align}
After this transmission, the SU continues to observe the channel. If the channel remains idle from time $t_N$ to time $t_N^e$, the SU is almost sure that this idle time is not generated by the exponential distribution with $\lambda_N$, where $t_N^e$ is determined by
\begin{align}
    1-e^{-\lambda_Nt_N^e}=1-\epsilon,\nonumber
\end{align}
where $\epsilon>0$ is a very small number. This implies that
\begin{align}
    t_N^e=\frac{1}{\lambda_N}\log\frac{1}{\epsilon}.\nonumber
\end{align}
Then the SU knows that with probability $1-\epsilon$ this idle time is not generated by $\lambda_N$. The SU then assumes that the idle time is generated by $\lambda_{N-1}$. Given this conservative guess as well as the fact that this idle time is longer than or equal to $t_N^e$, the SU can access the channel for another duration $t_{N-1}$ starting from $t_N^e$, which is determined as follows
\begin{align}
    \eta&=P(t_N^e\leq X< t_N^e+t_{N-1}|X\geq t_N^e, \lambda_{N-1})\nonumber \\
    &=1-e^{-\lambda_{N-1}t_{N-1}},\nonumber
\end{align}
where $X$ is the random variable of the channel idle time. This yields
\begin{align}
    t_{N-1}\approx\frac{\eta}{\lambda_{N-1}}.\nonumber
\end{align}
Again, if the channel remains idle from time $t_N^e+t_{N-1}$ to some time $t_{N-1}^e$ such that $1-e^{-\lambda_{N-1}t_{N-1}^e}=1-\epsilon$, the secondary device can transmit for a duration of $t_{N-2}$ starting from $t_{N-1}^e$ by assuming that the idle time follows the exponential distribution with $\lambda_{N-2}$ and given that the idle time is longer than $t_{N-1}^e$. $t_{N-2}$ can be derived similarly. Figure \ref{fig:ms} shows an example of the above process.
\begin{figure}[!t]
\centering
\includegraphics[width=3.5in]{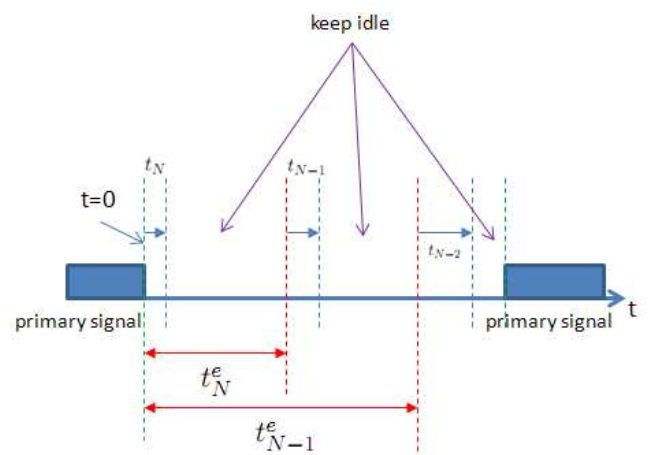}
\caption{An example of multiple-shot transmission strategy.}
\label{fig:ms}
\end{figure}

\par
The multiple-shot strategy is summarized as follows.
\begin{strategy}[statistical PTSI multiple-shot strategy]
The strategy follows three steps.
\begin{enumerate}
  \item The SU starts transmitting from the beginning of each channel idle time $t=0$ until it detects the primary signal or reaches the transmission duration limit $t_N$, which is determined by the probability of collision constraint $1-e^{-\lambda_Nt_N}=\eta$. If the channel remains idle during the time $[0, t_N]$, goes to the next step.
  \item If the channel remains idle till the time $t_{N-j}^e$ such that $1-e^{-\lambda_{i}t_{N-j}^e}= 1-\epsilon$, it transmits from the time $t_{N-j}^e$ until it detects the primary signal or reaches the transmission duration limit  $t_{N-j}$ which satisfies $1-e^{-\lambda_Nt_{N-j}}=\eta$ for $j=0, 1, 2, \ldots, N-1$. If the channel remains idle during the transmission, repeat this step.
  \item If primary signal appears and the channel turns busy, secondary user enters observing state and wait for the channel to be idle. Go to step 1.
\end{enumerate}
\end{strategy}

\par
An important feature of this multiple-shot strategy is that under this strategy, the probability of collision $\eta$ is preserved under the $\lambda_i\gg \lambda_{i-1}$ condition. This is illustrated in the following proposition.
\begin{proposition}\label{sec4prop1}
The probability of collision under the multiple-shot strategy is less than $\eta$ given that the CITD follows the associated hyper-exponential distribution.
\end{proposition}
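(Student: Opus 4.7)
The plan is to express $P(\mathrm{collision})$ as a double sum, over the true primary state $\lambda_i$ and over the successive transmission shots, then bound each contribution using the design calibration of the shot durations and start times. I would label the $N$ shots by $k = 1, \ldots, N$, with shot $k$ occupying the interval $I_k = [s_k, s_k + d_k]$ and tuned against the rate $\lambda_{N-k+1}$, so that $d_k$ solves $1 - e^{-\lambda_{N-k+1} d_k} = \eta$, $s_1 = 0$, and $s_k = \lambda_{N-k+2}^{-1} \log(1/\epsilon)$ for $k \geq 2$. Because the intervals $I_k$ are disjoint and the idle-time density is a convex mixture of exponentials, the total collision probability takes the closed form
\begin{align}
P(\mathrm{collision}) = \sum_{i=1}^N \alpha_i \sum_{k=1}^N e^{-\lambda_i s_k}\bigl(1 - e^{-\lambda_i d_k}\bigr). \nonumber
\end{align}

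For each state $i$, I would introduce the matched shot $k^{\ast} = N - i + 1$, whose duration is calibrated precisely so that $1 - e^{-\lambda_i d_{k^{\ast}}} = \eta$; this shot contributes $e^{-\lambda_i s_{k^{\ast}}}\eta \leq \eta$ to $P(\mathrm{collision}|\lambda_i)$. For shots $k < k^{\ast}$ targeting a faster rate $\lambda_{N-k+1} > \lambda_i$, the ratio $r_k := \lambda_i/\lambda_{N-k+1} < 1$ yields the per-shot bound $1 - e^{-\lambda_i d_k} = 1 - (1-\eta)^{r_k} \leq \eta$; under the regime $\lambda_{N-k+1} \gg \lambda_i$ this sharpens to $O(r_k \eta)$. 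For shots $k > k^{\ast}$ targeting a slower rate, the reaching factor satisfies $e^{-\lambda_i s_k} = \epsilon^{\lambda_i/\lambda_{N-k+2}} \leq \epsilon$ because $\lambda_i \geq \lambda_{N-k+2}$, so the cumulative late-shot contribution is $O(N\epsilon)$.

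Combining these bounds gives $P(\mathrm{collision}) \leq \eta \sum_i \alpha_i e^{-\lambda_i s_{k^{\ast}}} + R$, where $R$ collects the residual contributions from the unmatched shots. The main obstacle will be to show that $R$ is absorbed by the slack $\eta\bigl(1 - \sum_i \alpha_i e^{-\lambda_i s_{k^{\ast}}}\bigr)$ left behind by the matched-shot factors $e^{-\lambda_i s_{k^{\ast}}} \leq 1$. This is precisely where the widely-separated-rate hypothesis $\lambda_i \gg \lambda_{i-1}$ does the work: the large rate gaps force the early-shot contributions to be vanishingly small compared to $\eta$, while the exponentially small reaching factors suppress the late-shot contributions as $\epsilon \to 0$. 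Under these conditions $R$ is dominated by the available slack, and the desired bound $P(\mathrm{collision}) \leq \eta$ follows.
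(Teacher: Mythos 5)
Your proposal follows essentially the same route as the paper's proof: decompose the collision probability as a double sum over the true state and the disjoint shot intervals, bound the matched shot by $\eta$ via its calibration, kill the early (faster-rate) shots using the separation $\lambda_{N-k+1}\gg\lambda_i$, and kill the late shots via the reaching factor $\epsilon^{\lambda_i/\lambda_{N-k+2}}\leq\epsilon$. The paper is no more rigorous than you are about absorbing the residual $R$ --- it simply writes the unmatched contributions as ``$\approx 0$'' under the same widely-separated-rate and small-$\epsilon$ hypotheses --- so your account matches both the structure and the level of rigor of the original argument.
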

\begin{proof}
See Appendix \ref{app2}.
\end{proof}

\par
According to this proposition, the probability of collision of the multiple-shot strategy is at least as good as the one-shot strategy. The average of the secondary access time of the multiple-shot strategy is
\begin{align}\label{sec4eq2}
    T^{\text{stat-PTSI,ms}}(\eta)=\sum_{j=1}^{N-1}\alpha_j\sum_{i=j}^{N-1}\frac{e^{-\lambda_jt_{i+1}^e}\eta}{\lambda_{i}}+\frac{\eta}{\lambda_N}.
\end{align}
One should notice that this equation is valid for small $\eta$.

\par
The benefit of the multiple-shot strategy over the one-shot strategy can be explained as follows. In the one-shot strategy, it is automatically assumed that the CITD follows the fastest exponential decay rate and result in a very short transmission time $t_N$. In the multiple-shot strategy, the probability of collision is spread out in each step. Whenever an idle time follows a slow decay rate, the multiple-shot strategy will benefit from it by generating a longer access time. This can also be understood from \eqref{sec4eq2} where each term $\frac{1}{\lambda_i}$ is linearly combined rather than nonlinearly as in \eqref{sec3eq111}. Therefore, the benefit of small $\lambda_i$'s would not be penalized by large $\lambda_i$'s. The secondary capacity for $\eta\in[0,0.1]$ is plotted in Figure \ref{fig:sc1}. Its secondary capacity is suboptimal compared to the optimal strategies. But it is very close to that of the full PTSI balanced strategy. Moreover, the strategy design is completely independent of the $\alpha_i$'s. As a result, this strategy is insensitive to estimation errors in $\alpha_i$'s. This feature will be illustrated in detail in the next section.
\begin{figure}[!t]
\centering
\includegraphics[width=3.5in]{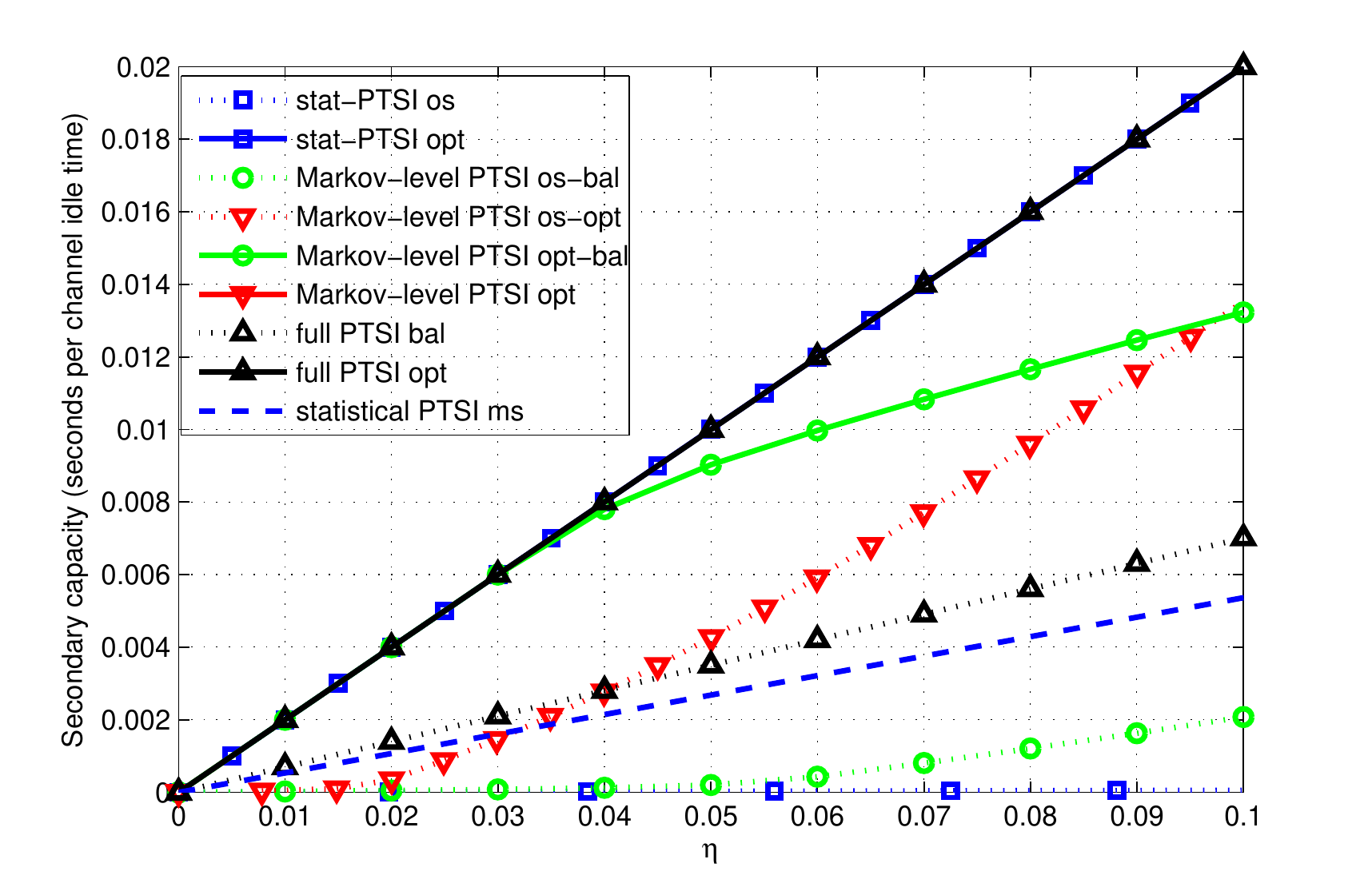}
\caption{Secondary capacities of multiple-shot strategy and the other 8 strategies.}
\label{fig:sc1}
\end{figure}

\subsection{Probability of collision in non-stationary primary traffic}
Because the value-to-cost ratio of hyper-exponential distribution is non-decreasing in $t$, the statistical PTSI optimal strategy often ends up with a strategy that forces the SU to keep idle from the beginning of a channel idle time for a specified period, and then transmit until it detects a primary packet. Therefore, even if the primary traffic has a different hyper-exponential distribution, this strategy is still optimal in the sense that for the specific probability of collision it generates, it achieves the highest secondary capacity. Nonetheless, to control the probability of collision.

\par
The following example is considered to show how the probability of collision can change in non-stationary primary traffic with statistical PTSI optimal and multiple-shot strategies. Consider hyper-exponential distribution with $\lambda_1=100$, $\lambda_2=6000$ with weights of $\alpha$ and $1-\alpha$. The SU treat $\alpha=0.5$. But the true value is $\alpha=0.5,0.6,0.7,0.8$ or $0.9$. The designed probability of collision is $0.1$. Figure \ref{fig:pcbar} plots the actual probability of collision with these two strategies. One can see that the multiple-shot strategy preserves the probability of collision while the optimal strategy loses control of it.
\begin{figure}[!h]
\centering
\includegraphics[width=3.5in]{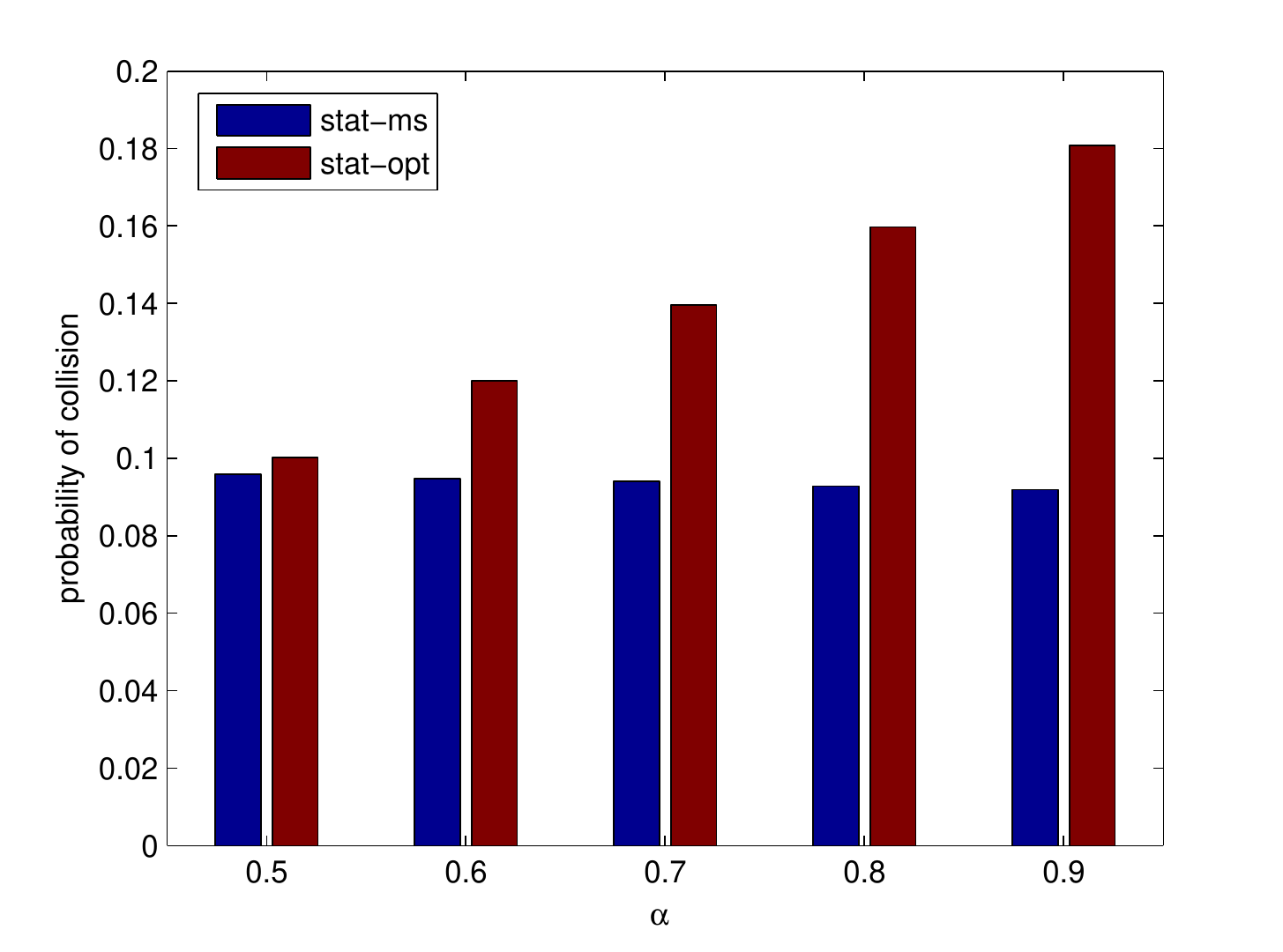}
\caption{Probability of collision of statistical PTSI optimal and multiple-shot strategies when the SU has incorrect information of $\alpha$.}
\label{fig:pcbar}
\end{figure}

\section{Experimental Study}\label{sec:6}
This section provides an experimental study of the primary traffic and the SU performance using a software-defined radio (SDR) device working on the 2.4 GHz frequency band. The Wi-Fi users are the primary users. Firstly, the traffic of the primary network is studied, especially its non-stationarity. The performances of the SU strategies are also studied. Here we are mainly interested in the performance of statistical PTSI optimal and multiple-shot strategies under the non-stationary primary traffic.

\subsection{Primary traffic non-stationarity}
The channel state is observed using the SDR. Samples of channel idle time durations are extracted from the observations. To study the non-stationarity of the channel idle time, the sample set is divided into small groups in a sequential order. The parameters $\alpha_i$'s and $\lambda_i$'s are estimated for each of the small groups using $N=2$. We study three data sets observed in different times of day. There are $100,000$ channel idle time samples in each data set. Furthermore, each set is divided into groups with $1000$ samples. The box plot of the estimates is shown in Figure \ref{fig:nonstationarity}. One can make two key observations from the data. Firstly, the $\alpha_i$'s and $\lambda_i$'s are not static within a data set, which means that there is variation within short periods of times. Secondly, the estimates from different data sets are quite distinct. Therefore, in practical situations, the non-stationarity does exist in primary traffic and it poses a huge challenge for the SU.
\begin{figure*}[!t]
  \centering
  \subfloat[$\lambda$ estimates]{\label{fig:lambdadata}\includegraphics[width=3.5in]{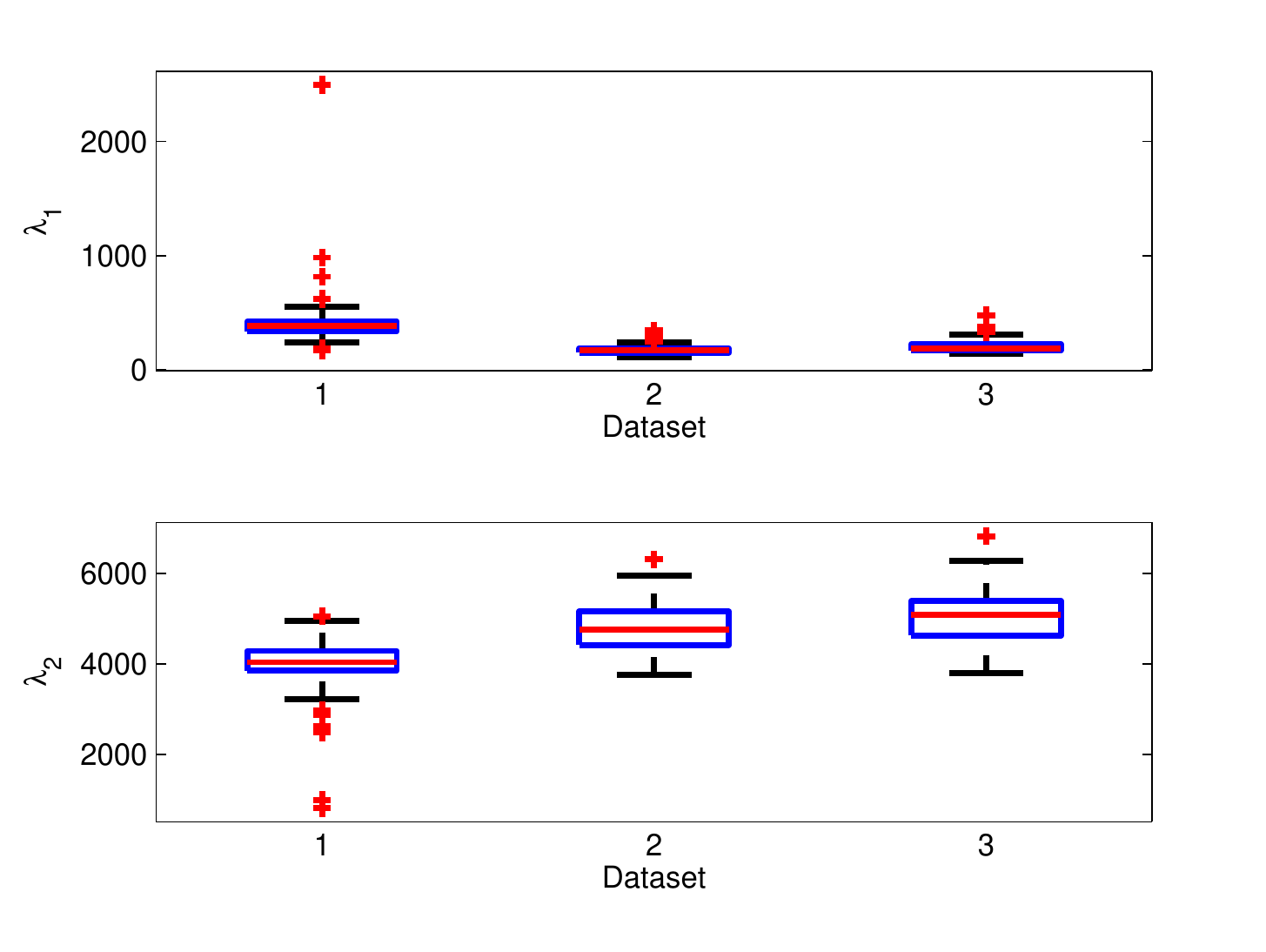}}
  \subfloat[$\alpha$ estimates]{\label{fig:alphadata}\includegraphics[width=3.5in]{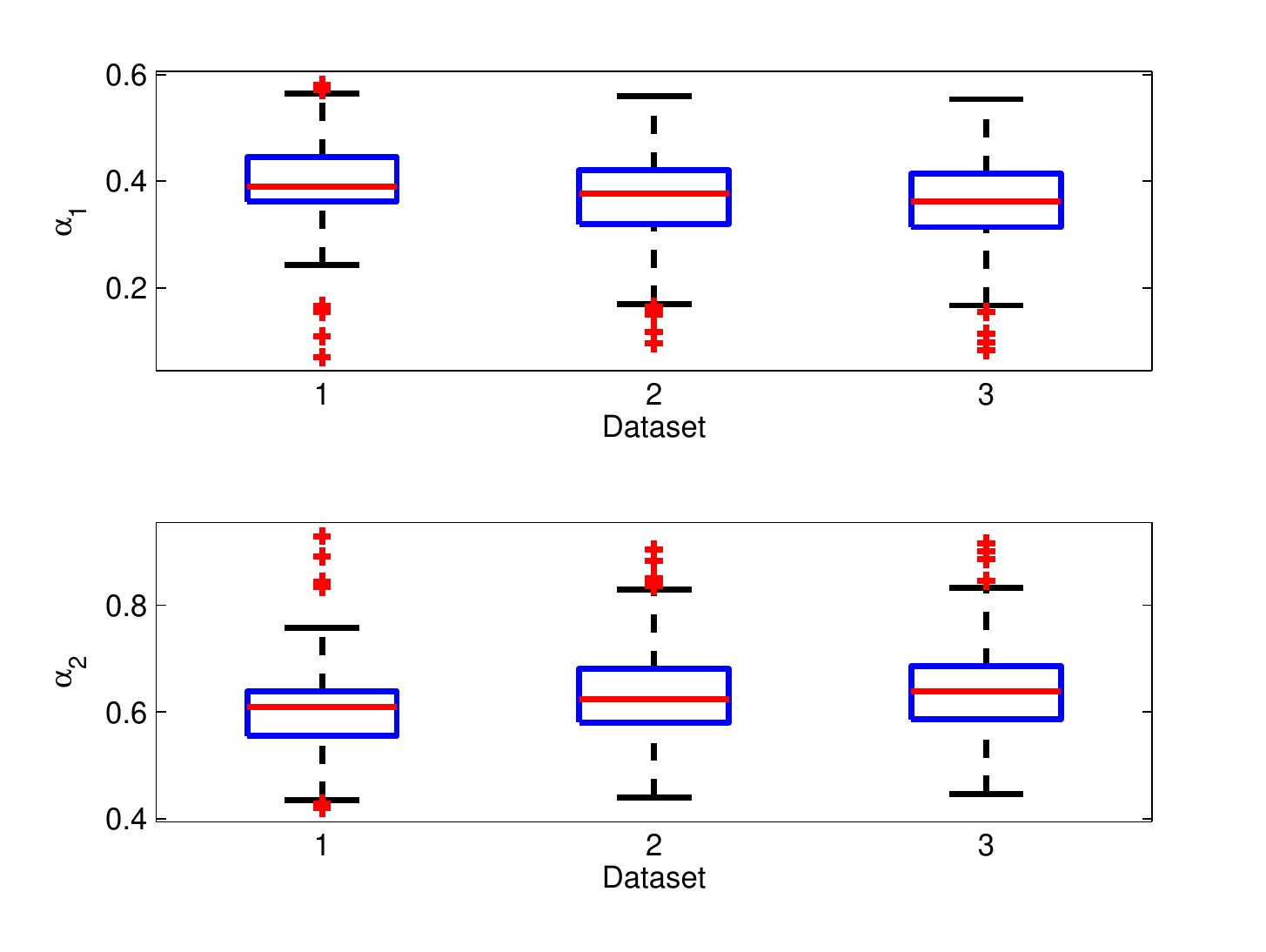}}
  \caption{Box plot of statistics of the parameter estimates in hyper-exponential distribution.}
  \label{fig:nonstationarity}
\end{figure*}

\subsection{SU performance in actual primary traffic}
According to experimental results in \cite{kundargi2010protomac}, the primary network (in our case is the WiFi network) has a hard limit on the amounts of interference that it can tolerate. Therefore it makes sense to use the probability of collision to describe the amount of interference. Although in the strategies we have discussed, the probability of collision constraint is a long-term average, it is also desirable that the probability of collision in short periods of time be bounded by the constraint. We will call the probability of collision in short periods of time as the \emph{instantaneous} probability of collision. Another metric that is used to describe the performance of the strategy is the probability that the instantaneous probability of collision exceeds the constraint. We will call this metric as the \emph{outage probability}.

\par
In the experiment, the transmission strategies which correspond to an estimate of the hyper-exponential distribution and a desired probability of collision are stored in the SU. The optimal and multiple-shot strategies are then applied to different time periods. An initial estimate of the hyper-exponential distribution is $\lambda^0=[160, 3670]$, $\alpha^0=[0.32, 0.68]$. Three data sets with $100,000$ channel idle time samples each are studied. Three performance metrics, the overall channel capacity, the overall probability of collision and the outage probability are used to compare between the statistical PTSI optimal and multiple-shot strategies. The desired probability of collision is $\eta=0.05$. Figure \ref{fig:oscandopc} plots the overall secondary capacity and probability of collision. In data sets 2 and 3, the secondary capacities and probability of collisions of the optimal strategy are substantially higher than the multiple shot strategy, but lower in data set 1. The overall probability of collision is close to the designed value. To explain this phenomenon, we produce the estimates using data set 1, 2, and 3: $\lambda^1=[349, 4005]$, $\alpha^1=[0.38, 0.62]$; $\lambda^2=[163, 4688]$, $\alpha^2=[0.35, 0.65]$; $\lambda^3=[186, 4961]$, $\alpha^3=[0.36, 0.64]$. We observe that the data sets 2 and 3 are more ``similar'' to the previous estimates, especially for the smaller $\lambda$. This indicates that if the estimates of the corresponding strategy is close to the actual value, the optimal strategy performs well. Otherwise, its performance drops significantly, as shown in data set 1. However, the multiple-shot strategy performs quite consistently for the three data sets. A similar conclusion can be made by inspecting the outage probability in Figure \ref{fig:outages}. Figure \ref{fig:outage} shows that the multiple-shot strategy has much lower outage probability than the optimal strategy. The box plot of the probabilities of collisions in Figure \ref{fig:outagebox} gives a more detailed description of the distribution of the probabilities of collisions. The black horizontal line is the designed probability of collision. One can see that the probabilities of collisions of the optimal strategy has a much wider span and statistically higher value than that of the multiple-shot strategy.
\begin{figure*}[!t]
  \centering
  \subfloat[Overall secondary capacity.]{\label{fig:osc}\includegraphics[width=3.5in]{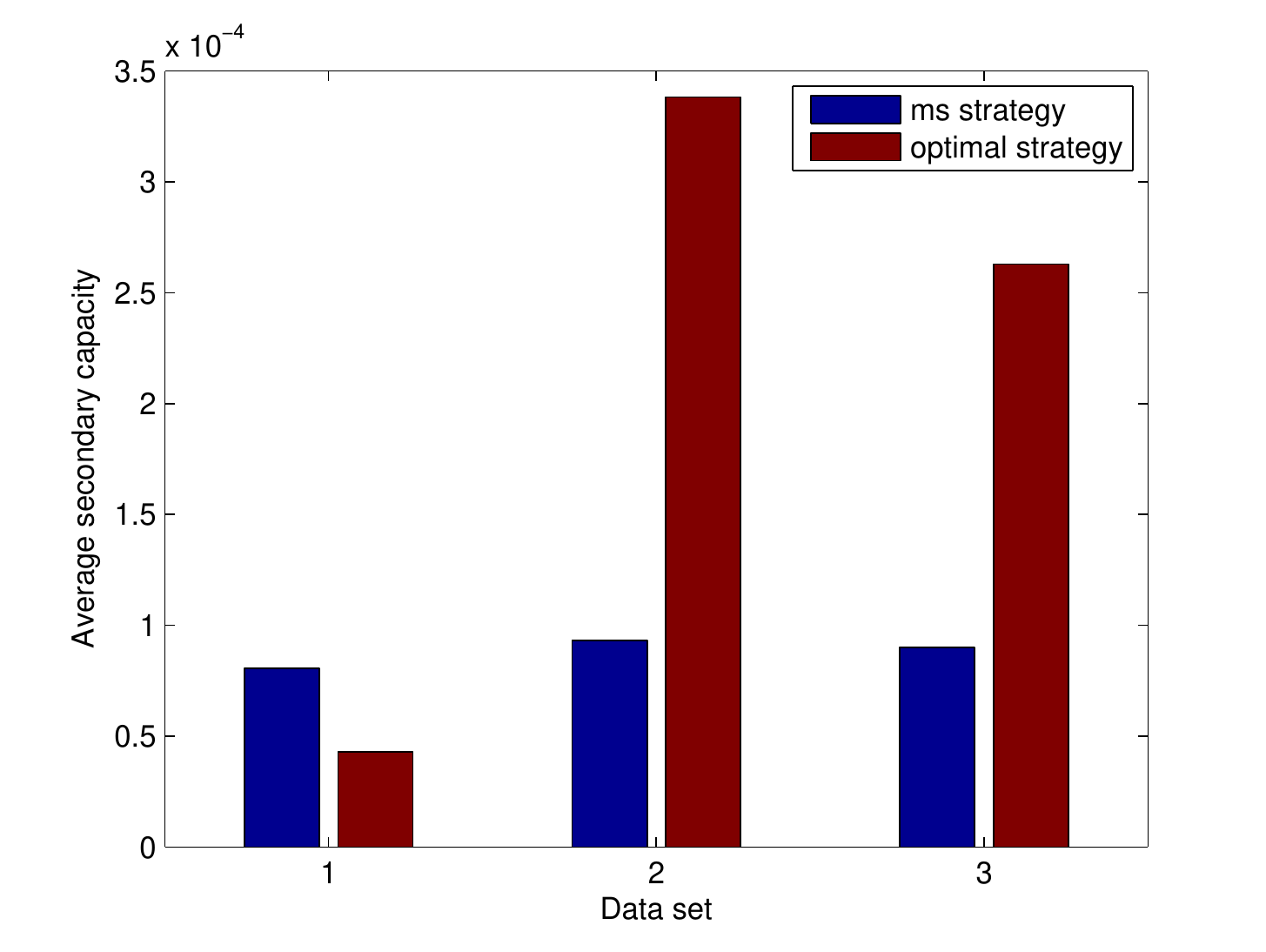}}
  \subfloat[Overall probability of collision.]{\label{fig:opc}\includegraphics[width=3.5in]{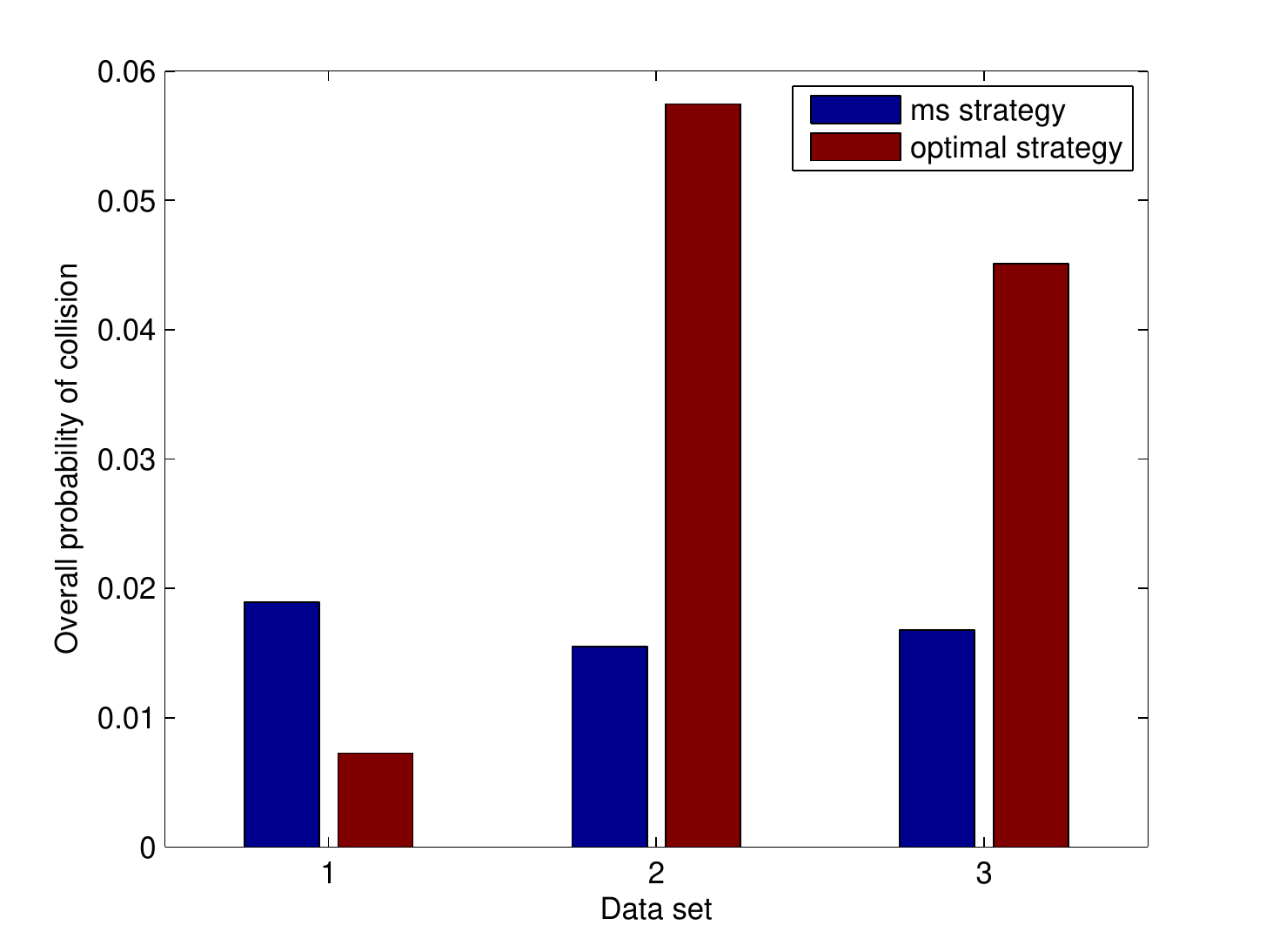}}
  \caption{Overall secondary capacity and probability of collision in three data sets.}
  \label{fig:oscandopc}
\end{figure*}
\begin{figure*}[!t]
  \centering
  \subfloat[Outage]{\label{fig:outage}\includegraphics[width=3.5in]{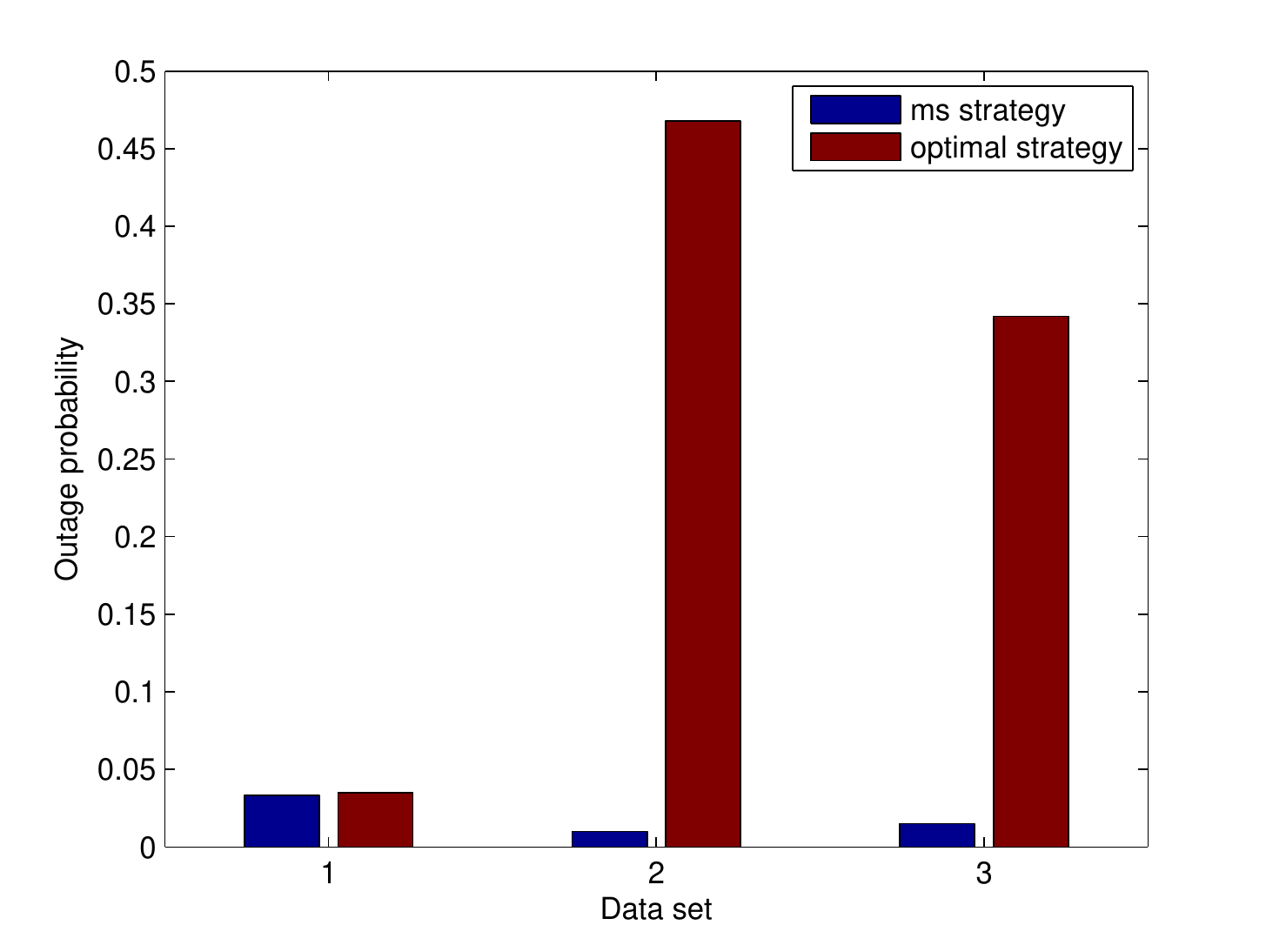}}
  \subfloat[Probability of collision boxplot]{\label{fig:outagebox}\includegraphics[width=3.5in]{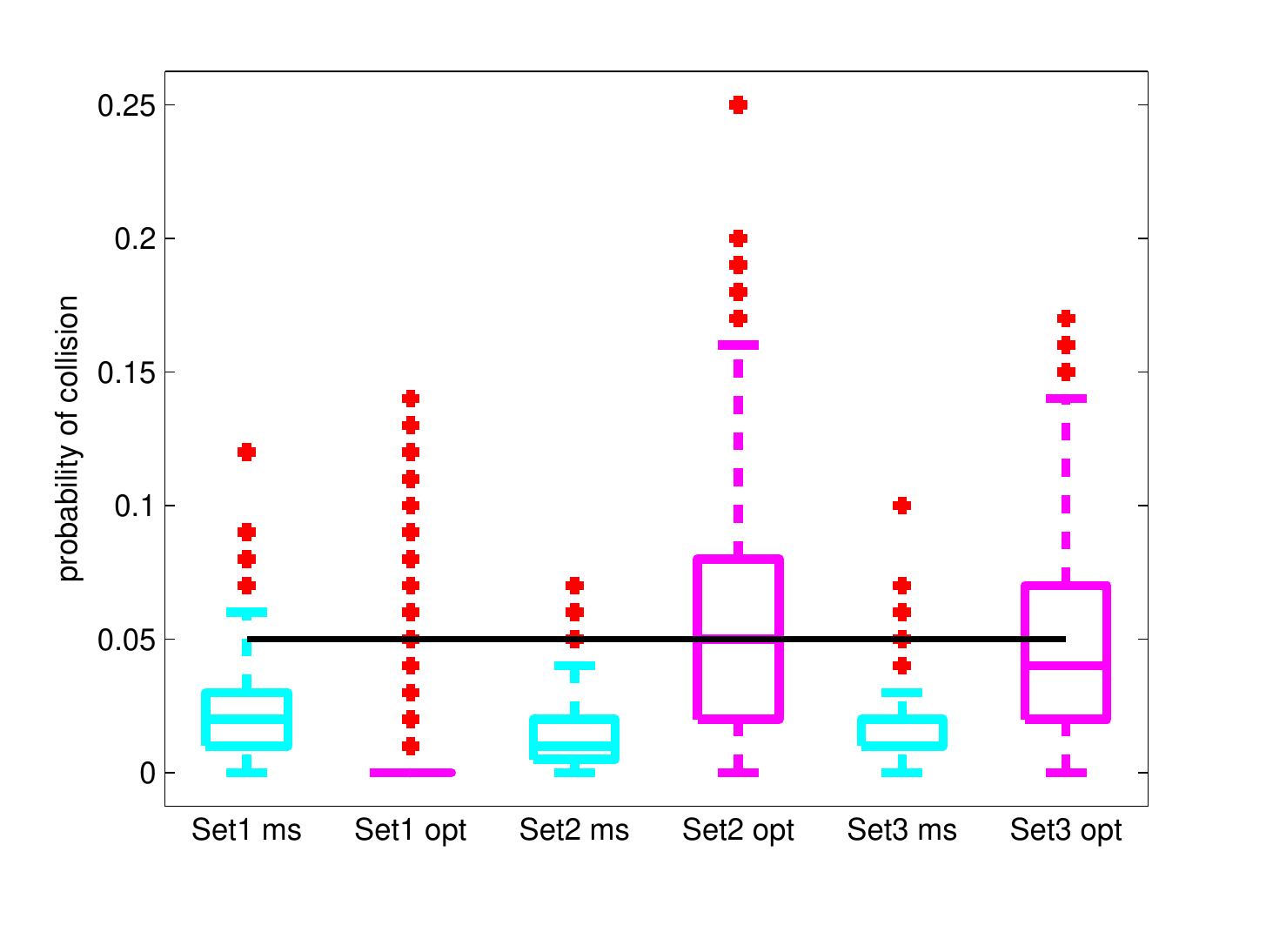}}
  \caption{Outage probability evaluated every 100 channel idle times.}
  \label{fig:outages}
\end{figure*}

\par
To sum up, the optimal strategy achieves the highest secondary capacity at the expense of a much higher outage probability in the presence of non-stationarity. Also, it suffers from a large performance loss when primary traffic model is incorrect. The multiple-shot strategy achieves a suboptimal secondary capacity. However, its secondary capacity is more stable with non-stationary primary traffic. Additionally, it maintains a very low outage probability.

\section{Conclusion}\label{sec:7}
In this work, we first proposed a hyper-exponential distribution to fit in the CITD data observed from real-world WLAN networks. The hyper-exponential distribution provides not only a good fit to the data, but also a well-structured primary traffic model. Based on this distribution, a SMMPP model for the primary traffic pattern is proposed. A novel concept of primary traffic state information (PSTI) is derived to describe the amount of information the SU has about the primary traffic. Three types of PTSI and their practicality are discussed along with the associated secondary transmission strategies, including the statistical PTSI, Markov-level PTSI and full PTSI.

\par
Currently, the most practical approach is to use the statistical PTSI. The optimal strategy with this PTSI is developed. We show that it achieves the highest secondary capacity when the primary traffic is stationary. But in non-stationary primary traffic, this strategy constantly produces higher probability of collision exceeding the designed value. This of course is not a desirable situation for the primary users. To this end, a multiple-shot strategy is proposed, which only achieves a secondary capacity similar to the full-PTSI balanced strategy in stationary primary traffic but has a constant probability of collision performance even if the primary traffic is non-stationary.

\appendices
\section{Proofs}\label{app2}
\subsection{Proof of Proposition \ref{sec3prop1}}
\begin{IEEEproof}
Firstly, we prove that the optimal solution takes on the form
\begin{align}
    x_i^{\ast}(t)=
    \begin{cases}
    1,\ \frac{1-F_i(t)}{f_i(t)}>a_i^{\ast},\\
    p^{\ast},\ \frac{1-F_i(t)}{f_i(t)}=a_i^{\ast},\\
    0,\ \frac{1-F_i(t)}{f_i(t)}<a_i^{\ast},\nonumber
    \end{cases}
\end{align}
for all $i$. This result is a straightforward extension of the result in \cite{huang2009optimal}. Then we prove that for the index set $I\subseteq\{1, 2, \ldots, N\}$, $a_i^{\ast}=a^{\ast}$, $\forall i\in I$ and $\min\limits_{t\geq 0}\frac{1-F_i(t)}{f_i(t)}>a^{\ast}$ when $i\in\{1, 2, \ldots, N\}\setminus I$. Assume without loss of generality that $a_k^{\ast}>a_l^{\ast}$, $\int_{t:\frac{1-F_k(t)}{f_k(t)}>a_k^{\ast}}x_k^{\ast}(t)f_k(t)dt+\int_{t:\frac{1-F_l(t)}{f_l(t)}>a_l^{\ast}}x_l^{\ast}(t)f_l(t)dt=\eta$. Then there must exist a $b$ such that $\int_{t:\frac{1-F_k(t)}{f_k(t)}>b}x_k(t)f_k(t)dt+\int_{t:\frac{1-F_l(t)}{f_l(t)}>b}x_l(t)f_i(t)dt=\eta$, $a_k^{\ast}>b>a_l^{\ast}$. Then one can verify \eqref{app2_eq1}. The last equality of \eqref{app2_eq1} comes from the fact that
\begin{align*}
    &\int_{t:\frac{1-F_k(t)}{f_k(t)}>a_k^{\ast}}x_k^{\ast}(t)f_k(t)dt+\int_{t:\frac{1-F_l(t)}{f_l(t)}>a_l^{\ast}}x_l^{\ast}(t)f_l(t)dt\nonumber\\
    &=\int_{t:\frac{1-F_k(t)}{f_k(t)}>b}x_k(t)f_k(t)dt+\int_{t:\frac{1-F_l(t)}{f_l(t)}>b}x_l(t)f_i(t)dt\nonumber\\
    \Leftrightarrow &\int_{t:\frac{1-F_k(t)}{f_k(t)}>b}x_k(t)f_k(t)dt-\int_{t:\frac{1-F_k(t)}{f_k(t)}>a_k^{\ast}}x_k^{\ast}(t)f_k(t)dt\nonumber\\
    &=\int_{t:\frac{1-F_l(t)}{f_l(t)}>a_l^{\ast}}x_l^{\ast}(t)f_l(t)dt-\int_{t:\frac{1-F_l(t)}{f_l(t)}>b}x_l(t)f_i(t)dt\nonumber
\end{align*}
Hence we arrive at a contradiction. Then the result of the proposition follows.
\end{IEEEproof}
\begin{figure*}[!t]
\normalsize
\begin{align}
\label{app2_eq1}
&\int_{t:\frac{1-F_k(t)}{f_k(t)}\geq b}x_k(t)(1-F_k(t))dt+\int_{t:\frac{1-F_l(t)}{f_l(t)}\geq b}x_l(t)(1-F_l(t))dt\nonumber\\
    &-\int_{t:\frac{1-F_k(t)}{f_k(t)}\geq a_k^{\ast}}x_k^{\ast}(t)(1-F_k(t))dt-\int_{t:\frac{1-F_l(t)}{f_l(t)}\geq a_l^{\ast}}x_l^{\ast}(t)(1-F_l(t))dt\nonumber\\
    =&\int_{t:b\leq\frac{1-F_k(t)}{f_k(t)}<a_k^{\ast}}x_k(t)(1-F_k(t))dt-\int_{t:a_l^{\ast}\leq\frac{1-F_l(t)}{f_l(t)}<b}x_l^{\ast}(t)(1-F_l(t))dt\nonumber\\
    =&\int_{t:b\leq\frac{1-F_k(t)}{f_k(t)}<a_k^{\ast}}x_k(t)f_k(t)\frac{1-F_k(t)}{f_k(t)}dt-\int_{t:a_l^{\ast}\leq\frac{1-F_l(t)}{f_l(t)}<b}x_l^{\ast}(t)f_l(t)\frac{1-F_l(t)}{f_l(t)}dt\nonumber\\
    \geq &b\int_{t:b\leq\frac{1-F_k(t)}{f_k(t)}<a_k^{\ast}}x_k(t)f_k(t)dt-b\int_{t:a_l^{\ast}\leq\frac{1-F_l(t)}{f_l(t)}<b}x_l^{\ast}(t)f_l(t)dt\nonumber\\
    =&0.
\end{align}
% Restore the current equation number.
% IEEE uses as a separator
\hrulefill
% The spacer can be tweaked to stop underfull vboxes.
\vspace*{4pt}
\end{figure*}

\subsection{Proof of Proposition \ref{sec4prop1}}
\begin{proof}
The probability of collision under the multiple-shot strategy can be written as
\begin{align}
    &P(\bigcup_{i=1}^{N}\{t_{i+1}^e\leq X<t_{i+1}^e+t_i\})\nonumber\\
    &=\sum_{j=1}^{N}\alpha_jP(\bigcup_{i=1}^{N}\{t_{i+1}^e\leq X<t_{i+1}^e+t_i\}|\lambda_j),\nonumber
\end{align}
where $t_{N+1}^e=0$. Each term inside the summation can be expanded as
\begin{align}\label{sec4eq1}
    &P(\bigcup_{i=1}^{N}\{t_{i+1}^e\leq X<t_{i+1}^e+t_i\}|\lambda_j)\nonumber\\
    =&P(\bigcup_{i=1}^{j-1}\{t_{i+1}^e\leq X<t_{i+1}^e+t_i\}|\lambda_j)\nonumber\\
    &+P(\bigcup_{i=j+1}^{N}\{t_{i+1}^e\leq X<t_{i+1}^e+t_i\}|\lambda_j)\nonumber\\
    &+P(t_{j+1}^e\leq X<t_{j+1}^e+t_j|\lambda_j).
\end{align}
Since $1-e^{-\lambda_{j}t_{j}^e}= 1-\epsilon$, the first term in \eqref{sec4eq1} is bounded by $\epsilon$
\begin{align}
    &P(\bigcup_{i=1}^{j-1}\{t_{i+1}^e\leq X<t_{i+1}^e+t_i\}|\lambda_j)\nonumber\\
    < & P(X>t_j^e|\lambda_j)\nonumber\\
    < & \epsilon.\nonumber
\end{align}
Then consider the following element in the second term
\begin{align}
    &P(t_{i+1}^e\leq X<t_{i+1}^e+t_i|\lambda_j)\nonumber\\
    =&e^{-\lambda_jt_{i+1}^e}(1-e^{-\lambda_jt_i})\nonumber\\
    =&\epsilon^{\frac{\lambda_j}{\lambda_{i+1}}}(1-e^{-\frac{\lambda_j}{\lambda_i}\ln\frac{1}{1-\eta}})\nonumber\\
    \leq & (1-e^{-\frac{\lambda_j}{\lambda_i}\ln\frac{1}{1-\eta}})\nonumber\\
    \approx & 0.\nonumber
\end{align}
The equality is obtained by substituting $t_{i+1}^e=\frac{1}{\lambda_{i+1}}\ln\frac{1}{\epsilon}$, and $t_i=\frac{1}{\lambda_{i}}\ln\frac{1}{1-\eta}$. The approximations are obtained according to the assumption that $\lambda_{i+1}\gg \lambda_j$ and $\lambda_i\gg \lambda_j$. Then \eqref{sec4eq1} can be approximated by
\begin{align}
    &P(\bigcup_{i=1}^{N}\{t_{i+1}^e\leq X<t_{i+1}^e+t_i\})\nonumber\\
    \approx & \sum_{j=1}^{N}\alpha_jP(t_{j+1}^e\leq X<t_{j+1}^e+t_j|\lambda_j)\nonumber\\
    =&\sum_{j=1}^{N}\alpha_je^{-\lambda_jt_{j+1}^e}(1-e^{-\lambda_jt_j})\nonumber\\
    \leq & \sum_{j=1}^{N}\alpha_j\eta\nonumber\\
    =&\eta.\nonumber
\end{align}
Hence prove the proposition.
\end{proof}

% Can use something like this to put references on a page
% by themselves when using endfloat and the captionsoff option.
\ifCLASSOPTIONcaptionsoff
  \newpage
\fi

\bibliographystyle{IEEEtran}
\bibliography{IEEEabrv,bibfile}

% that's all folks
\end{document}